\title{Computing Inductive Invariants of Regular Abstraction Frameworks}
\author{Philipp {Czerner}}{Technical University of Munich, Germany}{}{https://orcid.org/0000-0002-1786-9592}{}
\author{Javier Esparza}{Technical University of Munich, Germany}{}{https://orcid.org/0000-0001-9862-4919}{}
\author{Valentin Krasotin}{Technical University of Munich, Germany}{}{https://orcid.org/0009-0002-2129-2754}{}
\author{Christoph Welzel-Mohr}{Technical University of Munich, Germany}{}{https://orcid.org/0000-0001-5583-0640}{}
\authorrunning{P.\ Czerner, J.\ Esparza, V.\ Krasotin and C.\ Welzel-Mohr} 
\keywords{Regular model checking, abstraction, inductive invariants} 
\newcommand{\appref}[1]{Appendix~\ref{#1}}
\newcolumntype{R}[1]{>{\RaggedLeft\arraybackslash}p{#1}}
\tikzset{>={Latex[length=2mm,width=2mm]}}
\renewcommand{\O}{O}
\newcommand{\system}{\mathcal{S}}
\newcommand{\interpretation}{\mathcal{V}}
\newcommand{\configurations}{\mathcal{C}}
\newcommand{\constraints}{\mathcal{A}}
\newcommand{\constraint}{A}
\newcommand{\trafun}{\Delta}
\newcommand{\reach}{\textit{Reach}}
\newcommand{\potreach}{\textit{PReach}}
\newcommand{\comp}[1]{\overline{{#1}}}
\newcommand{\N}{\mathbb{N}}
\newcommand{\vari}[3]{{#1}_{{#2}:{#3}}}
\newcommand{\Inductive}{\textit{Ind}}
\newcommand{\btape}{\#}
\newcommand{\size}[1]{\left| #1 \right|}
\newcommand{\bk}{B}
\newcommand{\Language}[1]{\lang{L}({#1})}
\newcommand{\Relation}[1]{\lang{R}({#1})}
\newcommand{\framework}{\mathcal{F}}
\newcommand{\set}[1]{\left\{ #1 \right\}}
\newcommand{\vtuple}[2]{\big[{{#1} \atop {#2}}\big]}
\newcommand{\TM}{\mathcal{M}}
\newcommand{\Confalph}{\Lambda}
\newcommand{\Tapealph}{\Gamma'}
\newcommand{\Run}{\bm{\alpha}}
\newcommand{\productofprimes}{m}
\newcommand{\Markhead}{w}
\newcommand{\Markers}{\mathbf{m}}
\definecolor{nicebg}{HTML}{f6f0e4}
\definecolor{nicered}{HTML}{7f0a13}
\definecolor{nicebgred}{HTML}{f2e7e8}
\definecolor{niceblue}{HTML}{104354}
\definecolor{nicebgblue}{HTML}{e8edee}
\definecolor{nicegreen}{HTML}{217516}
\definecolor{nicebggreen}{HTML}{e9f1e8}
\definecolor{nicepurple}{HTML}{884bab}
\definecolor{nicebgpurple}{HTML}{f3edf7}
\definecolor{niceorange}{HTML}{d27c11}
\definecolor{nicebgorange}{HTML}{fbf2e8}
\definecolor{nicepink}{HTML}{e95f9f}
\definecolor{nicebgpink}{HTML}{fdeff6}
\definecolor{niceredlight}{HTML}{c9888d}
\definecolor{nicebluelight}{HTML}{78a4b8}
\definecolor{nicegreenlight}{HTML}{76de68}
\definecolor{nicepurplelight}{HTML}{bc87db}
\definecolor{niceredbright}{HTML}{bd0310}
\definecolor{nicebgredbright}{HTML}{f9e6e8}
\definecolor{nicebluebright}{HTML}{197b9b}
\definecolor{nicebgbluebright}{HTML}{e8f2f5}
\newcommand\xleftrightarrow[2][]{\ext@arrow 0099{\longleftrightarrowfill@}{#1}{#2}}
\def\longleftrightarrowfill@{\arrowfill@\leftarrow\relbar\rightarrow}
\begin{document}

\maketitle

\begin{abstract}
Regular transition systems (RTS) are a popular formalism for modeling infinite-state systems in general, and parameterised systems in particular.
In a CONCUR 22 paper, Esparza et al.\ introduce a novel approach to the verification of RTS, based on inductive invariants. The approach computes the intersection of all inductive invariants of a given RTS that can be expressed as CNF formulas with a bounded number of clauses, and uses it to construct an automaton recognising an overapproximation of the reachable configurations. The paper shows that the problem of deciding if the language of this automaton intersects a given regular set of unsafe configurations is in \EXPSPACE\ and \PSPACE-hard.

We introduce \emph{regular abstraction frameworks}, a generalisation of the approach of Esparza et al., very similar to the regular abstractions of Hong and Lin. A framework consists of a regular language of \emph{constraints}, and a transducer, called the  \emph{interpretation}, that assigns to each constraint the set of configurations of the RTS satisfying it. Examples of regular abstraction frameworks include the formulas of Esparza et al., octagons, bounded difference matrices, and views. We show that the generalisation of the decision problem above to regular abstraction frameworks remains in \EXPSPACE, and prove a matching (non-trivial) \EXPSPACE-hardness bound.

\EXPSPACE-hardness implies that, in the worst case, the automaton recognising the overapproximation of the reachable configurations has a double-exponential number of states. We introduce a learning algorithm that computes this automaton in a lazy manner, stopping whenever the current hypothesis is already strong enough to prove safety. We report on an implementation and show that our experimental results improve on those of Esparza et al.
\end{abstract}

\section{Introduction}
Regular transition systems (RTS) are a popular formalism for modelling infinite-state systems satisfying the following conditions: configurations can be encoded as words,  the set of initial configurations is recognised by a finite automaton, and the transition relation is recognised by a transducer. Model checking RTS has been intensely studied under the name of \emph{regular model checking} (see~\cite{JonssonN00,BouajjaniJNT00,KestenMMPS01,BoigelotLW03} and the surveys~\cite{AbdullaJNS04,Abdulla12,AbdullaST18,Abdulla21}). 
Most regular model checking algorithms address the \emph{safety problem}:  given a regular set of unsafe configurations, decide if its intersection with the set of reachable configurations is empty or not. They combine algorithms for the computation of increasingly larger regular subsets of the reachable configurations with acceleration, abstraction, and widening techniques~\cite{BouajjaniJNT00,JonssonN00,DamsLS01,AbdullaJNd02,BoigelotLW03,BouajjaniHV04,BouajjaniT12,BouajjaniHRV12,Legay12,ChenHLR17}. 

Recently,  Esparza et al.\ have introduced  a novel approach that, starting with the set of all configurations of the RTS, computes increasingly smaller inductive invariants, that is, inductive supersets of the reachable configurations.  More precisely, \cite{ERW22} considers invariants given by Boolean formulas in conjunctive normal form with at most $b$ clauses. The paper proves that, for every bound $b \geq 0$, the intersection of \emph{all} inductive $b$-invariants of the system is recognised by a DFA of double exponential size in the RTS. As a corollary, they obtain that, for every $b \geq 0$, deciding if this intersection contains some unsafe configuration is in \EXPSPACE. They also show that the problem is \PSPACE-hard, and leave the question of closing the gap open.

In~\cite{ERW23} (a revised version of~\cite{ERW22}), the \EXPSPACE\ proof is conducted in a more general setting than in~\cite{ERW22}. Inspired by this, 
in our first contribution we show that the approach of~\cite{ERW22} can be vastly generalised to arbitrary \emph{regular abstraction frameworks}, consisting of a  regular language of \emph{constraints}, and an \emph{interpretation}. Interpretations are functions, represented by transducers,  that assign to each constraint a set of configurations, viewed as the set of configurations that \emph{satisfy} the constraint. Examples of regular abstraction frameworks include the formulas of~\cite{ERW22} for every $b \geq 0$, views~\cite{AbdullaHH16}, and families of Presburger arithmetic formulas like octagons~\cite{Mine06} or bounded difference matrices~\cite{KroeningS16,BarrettT18}. A framework induces an abstract interpretation, in which, loosely speaking,  the word encoding a constraint is the abstraction of the set of configurations satisfying the constraint. 
Just as regular model checking started with the observation that different classes of \emph{systems} could be uniformly modeled as RTSs~\cite{AbdullaJNS04,Abdulla12,AbdullaST18,Abdulla21}, we add the observation, also made in~\cite{HongL24},  that different classes of \emph{abstractions} can be uniformly modeled as regular abstraction frameworks. We show that the generalisation of the verification problem of~\cite{ERW22,ERW23} to arbitrary regular abstraction frameworks remains in \EXPSPACE.

In our second contribution we show that our problem is also \EXPSPACE-hard. The reduction (from the acceptance problem for exponentially bounded Turing machines) is surprisingly involved. Loosely speaking, it requires to characterise the set of prefixes of the run of a Turing machine on a given word as an intersection of inductive invariants of a very restrictive kind. We think that this construction can be of independent interest.

Our third and final contribution is motivated by the \EXPSPACE-hardness result. A consequence of this lower bound is that the automaton recognising the overapproximation of the reachable configurations must necessarily have a double-exponential number of states in the worst case. We present an approach, based on automata learning, that constructs increasingly larger automata that recognise increasingly smaller overapproximations, and checks whether they are precise enough to prove safety. A key to the approach is solving the separability problem: given a pair $(c, c')$ of configurations, is there an inductive constraint that \emph{separates} $c$ and $c'$, i.e.\ is satisfied by $c$ but not by $c'$? We show that the problem is \PSPACE-complete and \NP-complete for interpretations captured by length-preserving transducers. We provide an implementation on top of a SAT solver for the latter case (this is the only case considered in \cite{ERW22,ERW23}). An experimental comparison shows that this approach beats the one of~\cite{ERW22,ERW23}.

\subparagraph*{Related work.} As mentioned above, our first contribution is a reformulation of results of~\cite{ERW23} into a more ambitious formalism; it is a conceptual but not a technical novelty. The second and third contributions are new technical results. 

Our regular abstraction frameworks are in the same spirit as the regular abstractions of  Hong and Lin~\cite{HongL24}, which  use regular languages as abstract objects. In this paper we concentrate on the inductive invariant approach of~\cite{ERW22}, and in particular on its complexity. This is unlike the approach of~\cite{HongL24}, which on the one hand is more general, since it also considers liveness properties, but on the other hand does not contain complexity results.

Automata learning has been explored for the verification of regular transition systems multiple times~\cite{Neider14,AbhayPhD,ChenHLR17,VardhanSVA04,NeiderJ13}.
Roughly speaking, all these approaches formulate a learning process to obtain a \emph{regular} inductive invariant of the system that proves a safety property. Since it is impossible to algorithmically identify the cases where such regular inductive invariant exists, timeouts~\cite{ChenHLR17} and resource limits~\cite{Neider14} are used as heuristics. In contrast, our approach is designed to always terminate. In particular, we either provide a regular set of constraints that suffices to establish the safety property or a pair of configurations that cannot be separated by inductive constraints of the considered framework. This information can be used to design a more precise framework by adding a new type of constraints.

\section{Preliminaries and regular transition systems}
\label{sec:prelims}

\subparagraph*{Automata.} Let $\Sigma$ be an alphabet. A \emph{nondeterministic finite automaton (NFA)} over $\Sigma$ is a tuple $A = (Q,\Sigma,\delta,Q_0,F)$ where $Q$ is a finite set of \emph{states}, $\delta:Q \times \Sigma \to \mathcal{P}(Q)$ is the \emph{transition function}, $Q_0 \subseteq Q$ is the set of \emph{initial states}, and $F \subseteq Q$ is the set of \emph{final states}. A \emph{run} of $A$ on a word $w = w_1 \cdots w_l \in \Sigma^l$ is a sequence $q_0q_1 \cdots q_l$ of states where $q_0 \in Q_0$ and $\forall i \in [l]: q_i \in \delta(q_{i-1},w_i)$. A run on $w$ is \emph{accepting} if $q_l \in F$, and $A$ \emph{accepts} $w$ if there exists an accepting run of $A$ on $w$. The language \emph{recognised} by $A$, denoted $L(A)$ or $L_A$, is the set of words accepted by $A$. If $|Q_0| = 1$ and $|\delta(q,a)| = 1$ for every $q\in Q, a \in \Sigma$ $|Q_0| = 1$, then $A$ is a deterministic finite automaton \emph{(DFA)}. In this case, we write $\delta(q,a) = q'$ instead of $\delta(q,a) = \{q'\}$ and have a single initial state $q_0$ instead of a set $Q_0$.

\subparagraph*{Relations.}  Let $R \subseteq X \times Y$ be a relation. The \emph{complement} of $R$ is the relation $\overline{R} := \{(x,y) \in X \times Y \mid (u,w) \notin R\}$. The \emph{inverse} of $R$ is the relation $R^{-1} := \{(y,x) \in Y \times X \mid (x,y) \in R\}$. The \emph{projections} of $R $ onto its first and second components are the sets $R|_1 := \{x \in X\mid \exists y \in Y \colon (x,y) \in R\}$ and $R|_2 := \{y \in Y\mid \exists x \in X \colon (x,y) \in R\}$. The \emph{join} of two relations $R \subseteq X \times Y$ and $S \subseteq Y \times Z$ is the relation $R \circ S := \{(x,z) \in X \times Z \mid \exists y \in Y \colon (x,y) \in R, (y,z) \in S\}$. The \emph{post-image} of a set $X' \subseteq X$ under a relation $R \subseteq X \times Y$, denoted $X' \circ R$ or $R( X')$, is the set $\{y \in Y \mid \exists x \in X' \colon (x,y) \in R\}$; the \emph{pre-image}, denoted $R \circ Y$ or $R^{-1}(Y)$, is defined analogously. Throughout this paper, we only consider relations where $X =\Sigma^*$ and $Y = \Gamma^*$ for some alphabets $\Sigma$, $\Gamma$. We just call them relations. A relation $R \subseteq \Sigma^* \times \Gamma^*$ is \emph{length-preserving} if $(u, w) \in R$ implies $|u|=|w|$.

\subparagraph*{Convolutions and transducers.}  Let $\Sigma$, $\Gamma$ be alphabets, let $\# \notin \Sigma \cup \Gamma$ be a padding symbol, and let $\Sigma_\#:= \Sigma \cup \{\#\}$ and $\Gamma_\#:= \Gamma \cup \{\#\}$. The \emph{convolution} of two words $u = a_1 \ldots a_k \in \Sigma^*$ and $w = b_1 \ldots b_l\in\Gamma^*$, denoted $\vtuple{u}{w}$,  is the word over the alphabet $\Sigma_\#  \times \Gamma_\#$ defined as follows. Intuitively, $\big[{u \atop w}\big]$ is the result of putting $u$ on top of $w$, aligned left, and padding the shorter of $u$ and $w$ with $\#$. Formally,  if $k \leq l$, then $\big[{u \atop w}\big] = \big[{a_1 \atop b_1}\big] \cdots \big[{a_k \atop b_k}\big]\big[{\# \atop b_{k+1}}\big] \cdots
\big[{\# \atop b_{l}}\big]$, and otherwise $\big[{u \atop w}\big] = \big[{a_1 \atop b_1}\big] \cdots \big[{a_l \atop b_l}\big]\big[{a_{l+1} \atop \#}\big] \cdots
\big[{a_k \atop \#}\big]$. The convolution of a tuple of words $u_1 \in \Sigma_1^*, \ldots, u_k \in \Sigma_k^*$ is defined analogously, putting all $k$ words on top of each other, aligned left, and padding the shorter words with $\#$.

A \emph{transducer} over $\Sigma\times\Gamma$ is an NFA over $\Sigma_\#  \times \Gamma_\#$. The binary relation recognised by a transducer $T$ over $\Sigma\times\Gamma$, denoted $\Relation{T}$, is the set of pairs $(u, w) \in \Sigma^* \times \Gamma^*$ such that $T$ accepts $\big[{u \atop w}\big]$. The definition is generalised to relations of higher arity in the obvious way. In the paper transducers recognise binary relations unless mentioned otherwise. A relation is \emph{regular} if it is recognised by some transducer. A transducer  is \emph{length-preserving} if it recognises a length-preserving relation.

\subparagraph*{Complexity of operations on automata and transducers.} Given NFAs $A_1$, $A_2$ over $\Sigma$ with $n_1$ and $n_2$ states,  DFAs $B_1$, $B_2$ over $\Sigma$ with $m_1$ and $m_2$ states, and transducers $T_1$ over $\Sigma \times \Gamma$ and  $T_2$ over $\Gamma \times \Sigma$ with $l_1$ and $l_2$ states, the following facts are well known (see e.g.\ chapters 3 and 5 of~\cite{EB23}):
\begin{itemize}
\item there exist NFAs for $\Language{A_1} \cup \Language{A_2}$,  $\Language{A_1} \cap \Language{A_2}$, and $\overline{\Language{A_1}}$ with at most $n_1 + n_2$, $n_1 n_2$, and $2^{n_1}$ states, respectively; 
\item there exist DFAs for $\Language{B_1} \cup \Language{B_2}$,  $\Language{B_1} \cap \Language{B_2}$, and $\overline{\Language{B_1}}$ with at most $m_1 m_2$, $m_1 m_2$, and $m_1$ states, respectively;
\item there exist NFAs for $\Relation{T_1}|_1$ and $\Relation{T_1}|_2$ and a transducer for $\Relation{T_1}^{-1}$ with at most $l_1$ states;
\item there exists a transducer for $\Relation{T_1} \circ \Relation{T_2}$ with at most $l_1 l_2$ states; and
\item there exist NFAs for $\Language{A_1} \circ \Relation{T_1}$ and $\Relation{T_1} \circ \Language{A_2}$ with at most $n_1 l_1$ and $l_1 n_2$ states, respectively.
\end{itemize}

\subsection{Regular transition systems}
We recall standard notions about regular transition systems and fix some notations. A \emph{transition system} is a pair $\system=(\configurations,\trafun)$ where $\configurations$ is the set of all possible \emph{configurations} of the system, and $\trafun \subseteq \configurations \times \configurations$ is a \emph{transition relation}.  The \emph{reachability relation} $\reach$ is the reflexive and transitive closure of $\trafun$. Observe that, by our definition of post-set, $\trafun(C)$ and $\reach(C)$ are the sets of configurations reachable in one step and in arbitrarily many steps from $C$, respectively.

Regular transition systems are transition systems where $\trafun$ can be finitely represented by a transducer. Formally:

\begin{definition}
 A transition system $\system = (\configurations, \trafun)$ is \emph{regular} if $\configurations$ is a regular language over some alphabet $\Sigma$, and $\trafun$ is a regular relation. We abbreviate regular transition system to RTS.
\end{definition}

RTSs are often used to model parameterised systems~\cite{AbdullaJNS04,Abdulla12,AbdullaST18,Abdulla21}.  In this case, $\Sigma$ is the set of possible \emph{states} of a process, the set of configurations is $\configurations = \Sigma^* \setminus \{\varepsilon\}$, and a configuration $a_1 \cdots a_n \in \Sigma^*$ describes the global state of an \emph{array} consisting of $n$ identical copies of the process, with the $i$-th process in state $a_i$ for every $1 \leq i \leq n$. The transition relation $\trafun$ describes the possible transitions of all arrays, of any length.  

\begin{example}[Token passing~\cite{AbdullaJNS04}]\label{ex:tokenpassing}
We use a version of the  well-known token passing algorithm as running example. We have an array of processes of arbitrary length. At each moment in time, a process either has a token ($t$) or not $(n)$. Initially, only the first process has a token. A process that has a token can pass it to the process to the right if that process does not have one. We set $\Sigma = \{t, n\}$, and so $\configurations = \{t,n\}^* \setminus \{\varepsilon\}$. We have $c_2 \in \trafun(c_1)$ if{}f the word $\vtuple{c_1}{c_2}$ belongs to the regular expression $\big(\vtuple{n}{n}+\vtuple{t}{t}\big)^{*} \; \big(\vtuple{t}{n}\vtuple{n}{t}\big)\; \big(\vtuple{n}{n}+\vtuple{t}{t}\big)^{*}$. For the set of initial configurations $C_I:=tn^*$ where only the first process has a token, the set of reachable configurations is $\reach(C_I)=n^{*} \; t \; n^{*}$.
\end{example}

\section{Regular abstraction frameworks}

In the same way that RTSs can model multiple classes of \emph{systems} (e.g.\ parameterised systems with synchronous/asynchronous, binary/multiway/broadcast communication), regular abstraction frameworks are a formalism to model a wide range of \emph{abstractions}. 

\begin{definition}
An \emph{abstraction framework} is a triple $\framework = (\configurations,\constraints,\interpretation)$, where $\configurations$ is a set of \emph{configurations}, $\constraints$ is a set of \emph{constraints}, and $\interpretation \subseteq \constraints \times \configurations$ is an \emph{interpretation}. $\framework$ is \emph{regular} if $\configurations$ and $\constraints$ are regular languages over alphabets $\Sigma$ and $\Gamma$, respectively, and the interpretation $\interpretation$ is a regular relation over $\constraints \times \configurations$. 
\end{definition}

\noindent Intuitively, the constraints of an abstraction framework are the abstract objects of the abstraction, and $\interpretation(\constraint)$ is the set of configurations abstracted by $\constraint$. The following remark formalises this.

\begin{remark}
An abstraction framework $\framework=(\configurations,\constraints,\interpretation)$ induces an abstract interpretation as follows. The concrete and abstract domains are  $(2^\configurations, \leq_\configurations)$ and $(2^\constraints, \leq_\constraints)$, respectively, where $\leq_\configurations := \subseteq$ and $\leq_\constraints := \supseteq$. Both are complete lattices. The concretisation function $\gamma \colon 2^\constraints \to 2^\configurations$ and the abstraction function $\alpha \colon 2^\configurations \to 2^\constraints$ are given by:
\begin{itemize}
\item $\gamma(\constraints') := \bigcap_{A \in \constraints'} \interpretation(A)$. Intuitively, $\gamma(\constraints')$ is the set of configurations that satisfy all constraints of $\constraints'$. In particular, $\gamma(\emptyset) = \configurations$.
\item $\alpha(\configurations') := \{ A \in \constraints \mid \configurations' \subseteq \interpretation(A) \}$. Intuitively, $\alpha(\configurations')$ is the set of constraints satisfied by all configurations in $\configurations'$. In particular, $\alpha(\emptyset) = \constraints$.
\end{itemize}
It is easy to see that the functions $\alpha$ and $\gamma$ form a Galois connection, that is, for all $C \subseteq \configurations$ and $A \subseteq \constraints$, we have $\mathcal{B} \subseteq \alpha(C) \Leftrightarrow C \subseteq \gamma(\mathcal{B})$. 
\end{remark}

Regular abstractions can be combined to yield more precise ones. Given abstraction frameworks $\framework_1 = (\configurations, \constraints_1, \interpretation_1)$ and $\framework_2=(\configurations, \constraints_2, \interpretation_2)$, we can define new frameworks $(\configurations, \constraints, \interpretation)$ by means of the following operations:
\begin{itemize}
\item Union: $\constraints := \constraints_1 \cup \constraints_2$, $\interpretation(\constraint) := \interpretation_1(\constraint) \mbox{ if $\constraint \in \constraints_1$, else } \interpretation_2(\constraint)$. \\
A constraint of the union framework is either a constraint of the first framework, or a constraint of the second.
\item Convolution: $\constraints = \constraints_1 \times \constraints_2$, $\interpretation(\constraint_1, \constraint_2):= 
\interpretation_1(\constraint_1) \cap \interpretation_2(\constraint_2)$. \\
A constraint of the convolution framework is the conjunction of two constraints, one of each framework. This operation is implicitly used in~\cite{ERW22}: the constraint for a Boolean formula  with $b$ clauses is the convolution, applied $b$ times,  of the constraints for formulas with one clause.
\end{itemize}
The proof of the following lemma is in \appref{app:operations}.
\begin{restatable}{lemma}{operations}\label{lem:operations}
Regular abstraction frameworks are closed under union and convolution. If the interpretations of the frameworks are recognised by transducers with $n_1$ and $n_2$ states, then the interpretations of the union and convolution frameworks are recognised by transducers with $O(n_1+n_2)$ and $O(n_1n_2)$ states, respectively.
\end{restatable}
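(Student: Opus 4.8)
The plan is to handle the two operations separately; in each case the constraint language of the new framework is regular for routine reasons ($\constraints_1 \cup \constraints_2$ for the union, and $\constraints_1 \times \constraints_2 := \{\vtuple{\constraint_1}{\constraint_2} \mid \constraint_1 \in \constraints_1,\ \constraint_2 \in \constraints_2\}$ for the convolution, via a product automaton), so the real content is to build a transducer for the new interpretation with the claimed number of states. Fix transducers $T_1, T_2$ over $\Gamma_1 \times \Sigma$ and $\Gamma_2 \times \Sigma$ recognising $\interpretation_1$ and $\interpretation_2$, with $n_1$ and $n_2$ states. I will use throughout that, since $\interpretation_i \subseteq \constraints_i \times \configurations$, the transducer $T_i$ already rejects every $\vtuple{\constraint}{c}$ with $\constraint \notin \constraints_i$.

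For the union I would first observe that, after relabelling, we may assume the constraint alphabets $\Gamma_1$ and $\Gamma_2$ are disjoint --- this is the typical situation when combining different abstraction families, and does not affect the expressive power of the frameworks. Then $\constraints_1$ and $\constraints_2$ are disjoint (up to $\varepsilon$), the case split in the definition of $\interpretation$ collapses, and $\interpretation = \interpretation_1 \cup \interpretation_2$. A transducer for this over $(\Gamma_1 \cup \Gamma_2) \times \Sigma$ is the disjoint union of $T_1$ and $T_2$, which on input $\vtuple{\constraint}{c}$ nondeterministically simulates one of them; by the observation above the $T_i$-branch contributes exactly the pairs with $\constraint \in \constraints_i$. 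This uses $n_1 + n_2 = \O(n_1 + n_2)$ states. (If one wants literally overlapping constraint languages, the $T_2$-branch has to be intersected with $\comp{\constraints_1} \times \configurations$ to implement the ``prefer $\interpretation_1$'' tie-break, which additionally consumes the automaton for $\constraints_1$.)

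For the convolution, a pair $(\vtuple{\constraint_1}{\constraint_2}, c)$ must be accepted exactly when $(\constraint_1, c) \in \interpretation_1$ and $(\constraint_2, c) \in \interpretation_2$, so I would take the product of $T_1$ and $T_2$ synchronised on the configuration component: its states are pairs of a $T_1$-state and a $T_2$-state, its initial and final states the corresponding products, and on reading the letter $\vtuple{(x,y)}{z}$ of $\vtuple{\vtuple{\constraint_1}{\constraint_2}}{c}$ (with $x$ ranging over $\Gamma_1 \cup \{\#\}$, $y$ over $\Gamma_2 \cup \{\#\}$, $z$ over $\Sigma \cup \{\#\}$) the first coordinate takes a $T_1$-step on $\vtuple{x}{z}$ and the second a $T_2$-step on $\vtuple{y}{z}$. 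This gives $n_1 n_2 = \O(n_1 n_2)$ states. The one point needing care --- and, such as it is, the main obstacle --- is the alignment of padding symbols: $\vtuple{\vtuple{\constraint_1}{\constraint_2}}{c}$ has length $\max(|\constraint_1|,|\constraint_2|,|c|)$, whereas $T_i$ expects a convolution of length $\max(|\constraint_i|,|c|)$, so whenever the configuration is not the longest of the three words the $i$-th coordinate is fed trailing letters $\vtuple{\#}{\#}$. I would deal with this by first adding a $\vtuple{\#}{\#}$-self-loop at every final state of $T_i$, so that $T_i$ tolerates such a tail without changing which genuine convolutions it accepts; since every real position of $\vtuple{\vtuple{\constraint_1}{\constraint_2}}{c}$ has at least one non-$\#$ component, the product never has to read $\vtuple{(\#,\#)}{\#}$, and one then checks that it recognises precisely $\{(\vtuple{\constraint_1}{\constraint_2}, c) \mid (\constraint_1, c) \in \interpretation_1,\ (\constraint_2, c) \in \interpretation_2\}$, i.e.\ the interpretation of the convolution framework.
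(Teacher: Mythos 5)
Your proof is correct and follows essentially the same route as the paper's: a disjoint union of the two transducers (under the harmless assumption that $\Gamma_1$ and $\Gamma_2$ are disjoint) for the union framework, and a product construction synchronised on the configuration component for the convolution framework. If anything, you are more careful than the paper, which keeps the constructions deterministic but silently ignores the padding-alignment issue that you resolve with the $\vtuple{\#}{\#}$ self-loops at final states.
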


Many abstractions used in the literature can be modeled as regular abstraction frameworks. We give some examples.

\begin{example}
Consider a transition system where $\configurations = \N^d$ for some $d$, and $\trafun$ is given by a formula of Presburger arithmetic $\delta(\mathbf{x}, \mathbf{x'})$, that is, $(\mathbf{n}, \mathbf{n'}) \in \trafun$ if{}f  $\delta(\mathbf{n}, \mathbf{n'})$ holds. It is well-known that for any Presburger formula there is a transducer recognising the set of its solutions when numbers are encoded in binary, and so with this encoding $(\configurations, \trafun)$ is an RTS.  Any Presburger formula $\varphi(\mathbf{x}, \mathbf{y})$, where $\mathbf{x}$ has dimension $d$ and $\mathbf{y}$ has some arbitrary dimension $e$, induces a regular abstraction framework as follows. The set of constraints is the set of all tuples $\mathbf{m} \in \N^e$; the interpretation assigns to $\mathbf{m}$ all tuples $\mathbf{n}$ such that $\varphi(\mathbf{n}, \mathbf{m})$ holds. Intuitively, the constraints are the formulas $\varphi_\mathbf{m}(\mathbf{x}):= \varphi(\mathbf{x},\mathbf{m})$, but using $\mathbf{m}$ as encoding of $\varphi_\mathbf{m}$.

Special cases of this setting are used in many different areas. For example, bounded difference matrices (see e.g.~\cite{KroeningS16,BarrettT18}) and octagons~\cite{Mine06}  correspond to abstraction frameworks with constraints $\varphi(x_1,x_2, y)$ of the form $x_1 \pm x_2 \leq y$.
\end{example}

\begin{example}
\label{ex:traps}
The approach to regular model checking of~\cite{ERW22} is another instance of a regular abstraction framework. The paper encodes sets of configurations as positive Boolean formulas in conjunctive normal form with a bounded number $b$ of clauses.  We explain this by means of an example. Consider an RTS with $\Sigma = \{a,b,c\}$ and $\configurations = \Sigma^*$. Consider the formula $\varphi=(\vari{a}{1}{5} \vee \vari{b}{1}{5} \vee \vari{a}{3}{5}) \wedge \vari{b}{4}{5}$. We interpret $\varphi$ on configurations. The intended meaning of a literal, say $\vari{a}{1}{5}$, is ``if the configuration has length 5, then its first letter is an $a$.'' So the set of configurations satisfying the formula is $\Sigma^{\leq 4} + \Sigma^6\Sigma^* + (a+b)\Sigma^2 b \Sigma + \Sigma^2 a b\Sigma$. In the formulas of  \cite{ERW22} all literals have the same length, where the length of a literal $\vari{x}{i}{j}$ is $j$.

Formulas with at most $b$ clauses can be encoded as words over the alphabet $\Gamma = (2^\Sigma)^b$. Each clause is encoded as a word over $2^\Sigma$. For example, the encodings of the clauses $(\vari{a}{1}{5} \vee \vari{b}{1}{5} \vee \vari{a}{3}{5})$ and $\vari{b}{4}{5}$ are $\{a,b\} \emptyset \{a\} \emptyset \emptyset$ and $\emptyset\emptyset\emptyset \{b\}\emptyset$, and the encoding of $\varphi$ is the convolution of the encodings of the clauses. It is easy to see that the interpretation of~\cite{ERW22} that assigns to a formula the set of configurations satisfying it is a regular relation recognised by a transducer with $2^b$ states \cite{ERW22}. In particular, for the case $b=1$ we get the two-state transducer on the left of Figure~\ref{fig:interp}.
\end{example}

\begin{example}
\label{ex:third}
In~\cite{AbdullaHH16} Abdulla et al.\ introduce \emph{view abstraction} for the verification of parameterised systems. Given a number $k \geq 1$, a \emph{view} of a word $w \in \Sigma^*$ is a scattered subword of $w$. Loosely speaking, Abdulla et al.\ abstract a word by its set of views of length up to $k$. In our setting, a constraint is a set  $F \subseteq \Sigma^{\leq k}$ of ``forbidden views'',  and $\interpretation(V)$ is the set of all words that do not contain any view of $F$. Since $k$ is fixed, this interpretation is regular.
\end{example}

\subsection{The abstract safety problem}
We apply regular abstraction frameworks to the problem of deciding whether an RTS avoids some regular set of unsafe configurations. 
For simplicity, we assume w.l.o.g.\ that the set of configurations of the RTS is $\Sigma^*$\footnote{By interpreting $\trafun$ as a relation over $\Sigma \times \Sigma$, any RTS can be transformed into an equivalent one with the same transitions where the set of configurations is $\Sigma^*$.}. Let us first formalise the \textsc{Safety} problem:

\begin{center}
\begin{minipage}{\textwidth}
Given:\ \begin{minipage}[t]{13cm} 
 a nondeterministic transducer recognising a regular relation $\trafun \subseteq \Sigma^* \times \Sigma^*$, and \\ two NFAs
 recognising regular sets $C_I, C_U \subseteq \Sigma^*$ of initial and unsafe configurations, respectively.
 \end{minipage} \\[0.2cm]
Decide:   does $\reach(C_I) \cap C_U = \emptyset$ hold?
\end{minipage}
\end{center}

It is a folklore result that \textsc{Safety} is undecidable. Let us sketch the argument. The configurations of a given Turing machine can be encoded as words of the form $w_l \, q \, w_r$, where $w_l, w_r$ encode the contents of the tape to the left and to the right of the head, and $q$ encodes the current state. With this encoding, the successor relation between configurations of the Turing machine is regular, and so is the set of accepting configurations. Taking the latter as set of unsafe configurations, the Turing machine accepts a given initial configuration if{}f the RTS started at the initial configuration is unsafe.

\subparagraph{\textsc{AbstractSafety}.} We show that regular abstraction framework induces an ``abstract'' version of the safety problem, in which we replace the reachability relation by an overapproximation derived from the abstraction framework. Fix an RTS $\system=(\configurations,\trafun)$ and a regular abstraction framework $\framework = (\configurations,\constraints, \interpretation)$. We introduce some definitions:

\begin{definition}
A set $C \subseteq \configurations$ of configurations is \emph{inductive} if $\trafun(C) \subseteq C$.  A constraint $\constraint$ is \emph{inductive} if $\interpretation(\constraint)$ is inductive. We let $\Inductive \subseteq \constraints$ denote the set of all inductive constraints of $\constraints$.  Given two configurations $c, c'$ and $\constraint \in \Inductive$, we say that $\constraint$ \emph{separates} $c$ from $c'$ if $c \in \interpretation(\constraint)$ and $c' \notin \interpretation(\constraint)$.
\end{definition}

\noindent It is a folklore result that $\reach(C)$ is the smallest inductive set containing $C$, and that if some $\constraint \in \Inductive$ separates $c$ and $c'$, then $(c, c') \notin \reach$. Hence, an abstraction framework $(\configurations, \constraints, \interpretation)$ induces a \emph{potential reachability} relation $\potreach \subseteq \configurations \times \configurations$,  defined as the set of all pairs of configurations that are not separated by any inductive constraint. Formally:
\begin{equation*}
\potreach  :=  \{ (c, c') \in \configurations \times \configurations \mid \forall \constraint \in \Inductive \colon c \in \interpretation(\constraint) \to c' \in \interpretation(\constraint) \} 
\end{equation*}

\noindent We have $\reach(C) \subseteq \potreach(C)$ for every set of configurations $C$.  In particular, given sets $C_I, C_U \subseteq \configurations$ of initial and unsafe configurations, if $\potreach(C_I) \cap C_U = \emptyset$, then the RTS is safe. 

\begin{example}
\label{ex:tokenpassingandtraps}
Consider the RTS of the token passing system of Example~\ref{ex:tokenpassing}, where $\Sigma=\{t,n\}$. We give two examples of abstraction frameworks. The first one is the abstraction framework of~\cite{ERW22}, already presented in Example~\ref{ex:traps}, with $b=1$. We have $\Gamma= 2^\Sigma = \{ \emptyset, \{t\}, \{n\}, \Sigma \}$. A constraint like $\varphi = \bigvee_{i=3}^5 \vari{t}{i}{5}$ is encoded by the word $\emptyset\emptyset\{t\}\{t\}\{t\} \in \Gamma^*$, and interpreted as the set of all configurations of length 5 that have a token at positions 3, 4, or 5, plus the set of all configurations of length different from 5. The two-state transducer for this interpretation is on the left of Figure~\ref{fig:interp}. For example, the left state has transitions leading to itself for the letters $\vtuple{\emptyset}{n}, \vtuple{\emptyset}{t}, \vtuple{\{t\}}{n},\vtuple{\{n\}}{t}$.
The constraint $\varphi$ is inductive. In fact, the language of all non-trivial inductive constraints (a constraint is trivial if it is satisfied by all configurations or by none) is $\{n\}^+\emptyset^*\{t\}^* + \{n\}^*\emptyset^*\{t\}^+$.
The set of configurations potentially reachable from $C_I = t n^*$ is $\potreach(C_I)=(tn + nn^*t)(t+n)^*$. In particular, $\potreach(C_I) \cap n^* = \emptyset$, but $tnt \in \potreach(C_I)$.  So this abstraction framework is strong enough to prove that every reachable configuration has at least one token, but not to prove that it has exactly one.

\begin{figure}[t]
\begin{center}
  \begin{minipage}[c]{0.5\textwidth}
    \begin{center}
     \resizebox{!}{20mm}{%
      \begin{tikzpicture}[]
        \node[state, initial, initial left, initial text=] (0) {};
        \node[state, right=3.0cm of 0, accepting] (1) {};
        \draw[->] (0) to node[above] {$\left \{ \vtuple{\gamma}{\sigma} \mid \sigma \in \gamma \right\}$} (1);
         \draw[->] (0) to node[below] {$\vtuple{\star}{\#}, \vtuple{\#}{\star}$} (1);
        \draw[->, loop above] (0) to node[above] {$\left \{ \vtuple{\gamma}{\sigma} \mid \sigma \notin \gamma \right\}$} (0);
        \draw[->, loop above] (1) to node[above] {$\Gamma_\# \times \Sigma_\#$} (1);
      \end{tikzpicture}
      }
    \end{center}
  \end{minipage}%
  \begin{minipage}[c]{0.5\textwidth}
    \begin{center}
    \resizebox{!}{3cm}{%
      \begin{tikzpicture}[]
        \node[state, initial, initial left, initial text=] (0) {};
        \node[state, right=40mm of 0, accepting] (1) {};
        \path (0) -- (1) coordinate[midway] (10) {};
        \node[state, below=1cm of 10, accepting] (2) {};
        \draw[->] (0) to node[above] {$\left \{ \vtuple{\gamma}{\sigma} \mid \sigma \in \gamma \right\}$} (1);
        \draw[->, loop above] (0) to node[above] {$\left \{ \vtuple{\gamma}{\sigma} \mid \sigma \notin \gamma \right\}$} (0);
        \draw[->, loop above] (1) to node[above] {$\left \{ \vtuple{\gamma}{\sigma} \mid \sigma \notin \gamma \right\}$} (1);
        \draw[->] (0) to node[left, yshift=-2mm] {$\vtuple{\star}{\#}, \vtuple{\#}{\star} \; $} (2);
        \draw[->] (1) to node[right, yshift=-2mm] {$\; \vtuple{\star}{\#}, \vtuple{\#}{\star}$} (2);
        \draw[->, loop right] (2) to node[right, yshift=-1mm] {$\Gamma_\# \times \Sigma_\#$} (1);
      \end{tikzpicture}
      }
    \end{center}
  \end{minipage}
\end{center}
\caption{Transducers for the interpretations of Example~\ref{ex:traps} and~\ref{ex:tokenpassingandtraps}. We have $\Gamma= 2^\Sigma$, and so the alphabet of the transducer is $(2^\Sigma)_\# \times \Sigma_\#$. The symbols $\vtuple{\star}{\#}$ 
and $\vtuple{\#}{\star}$ stand for the sets of all letters of the form  $\vtuple{\gamma}{\#}$ and $\vtuple{\#}{\sigma}$, respectively .}
\label{fig:interp}
\end{figure}
\end{example}

Consider now the framework in which, instead of a \emph{disjunction} of literals, a constraint is an \emph{exclusive disjunction} of literals, that is, a configuration satisfies the constraint if it satisfies \emph{exactly one} of its literals. So, in particular, the interpretation of $\emptyset\emptyset\{t\}\{t\}\{t\}$ is now that exactly one of the positions 3, 4, and 5 has a token. The interpretation is also regular; it is given by the three-state transducer on the right of Figure~\ref{fig:interp}. Examples of inductive constraints are $\{t\} \emptyset \{t, n\} \{n\}$ and all words of $\{t\}^*$. The language of non-trivial inductive constraints is given by the DFA on the left of Figure~\ref{fig:tokenpassingflows}. Observe that the set of words satisfying all constraints of $\{t\}^*$ is the language $n^*tn^*$. In particular, we have $\potreach(C_I) \subseteq n^*tn^* = \reach(C_I)$, and so $\potreach(C_I)=\reach(C_I)$.  \qed

\begin{figure}[t]
\begin{center}
\begin{minipage}[c]{0.9\textwidth}
    \resizebox{!}{4cm}{%
       \begin{tikzpicture}
        \node[state, initial, initial above, initial text=] (0) {};
        \node[state, right =1.5cm of 0, accepting] (2) {};
        \node[state, left =2.9cm  of 0] (1) {};
        \path (0) -- (1) coordinate[midway] (01);
        \node[state, below=22mm  of 0, accepting] (3) {};
        \node[state, right=1.5cm of 2] (7) {};
        \node[state, below=7mm of 7, accepting] (N) {};
        \node[state, below=22mm of 1, accepting] (5) {};
        \node[state, below=7mm of 2, accepting] (8) {};
        \node[state, right =1.5cm of 3, accepting] (9) {};    
        \node[state] (4) at (01 |- 8) {};  
        \draw[->] 
        (0) edge node[above] {$\{t\}$} (2)
        (0) edge node[above] {$\emptyset$} (1)
        (0) edge node[right,pos=0.2] {$\{n\}$} (3)
        (0) edge node[left,pos=0.15] {$\, \{t,n\}$} (4)
        (1) edge node[left] {$\{t\}$} (5)
        (1) edge node[right] {$\{t,n\}$} (4)
        (2) edge node[above] {$\emptyset$} (7)
        (2) edge node[left,pos=0.5] {$\{n\}$} (8)
        (2) edge node[right,pos=0.2] {$\{t,n\} \;$} (N)
        (3) edge node[below] {$\{t\}$} (5)
        (3) edge node[above left=-1mm] {$\{n\}$} (8)
        (3) edge node[below,pos=0.5] {$\emptyset \;$} (9)
        (4) edge node[below,pos=0.2] {$\{n\}$} (8)
        (7) edge node[right] {$\{t,n\}$} (N)
        (N) edge node[below] {$\; \{n\}$} (8)
        (1) edge[loop left] node {$\emptyset$} (1)
        (2) edge[loop above] node[right,xshift=2mm,yshift=-2mm] {$\{t\}$} (2)
        (4) edge[loop below] node[left,xshift=-2mm,yshift=1mm] {$\emptyset$} (4)
        (7) edge[loop right] node {$\emptyset$} (7)
        (8) edge[out=330,in=300,looseness=8] node[below right=-1mm] {$\{n\}$} (8)
        (9) edge[loop right] node {$\emptyset$} (9);      
      
    
        \node[state, right =1.7cm of N, yshift=1cm,initial, initial above, initial text=] (0p) {};
        \node[state, below =1.3cm of 0p, accepting] (2p) {};
        \draw[->] 
         (0p) edge node[right] {$\{t\}$} (2p)
        (2p) edge[loop right] node {$\{t\}$} (2p);
        \end{tikzpicture}
        }
\end{minipage}
\end{center}      
\caption{On the left, DFA recognising all non-trivial inductive constraints of Example~\ref{ex:tokenpassingflows}. On the right, fragment with the same interpretation as the DFA on the left.}
\label{fig:tokenpassingflows}
\end{figure}

\medskip The \textsc{AbstractSafety} problem is defined exactly as \textsc{Safety}, just replacing the reachability set $\reach(C_I)$ by the potential reachability set $\potreach(C_I)$ implicitly defined by the regular abstraction framework:

\begin{center}
\begin{minipage}{\textwidth}
Given:\ \begin{minipage}[t]{13cm} 
 a nondeterministic transducer recognising a regular relation $\trafun \subseteq \Sigma^* \times \Sigma^*$;
two NFAs recognising regular sets $C_I, C_U \subseteq \Sigma^*$ of initial and unsafe configurations, respectively; and
a deterministic transducer recognising a regular interpretation $\interpretation$ over $\Gamma \times \Sigma$.
 \end{minipage} \\[0.2cm]
Decide:  does $\potreach(C_I) \cap C_U = \emptyset$ hold?
\end{minipage}
\end{center}

Recall that \textsc{Safety} is undecidable. In the rest of this section and in the next one we show that \textsc{AbstractSafety} is \EXPSPACE-complete. Membership in \EXPSPACE\ was essentially proved in~\cite{ERW23}, while \EXPSPACE-hardness was left open. We briefly summarise the proof of membership in \EXPSPACE\ presented in~\cite{ERW23}, for future reference in our paper.

\begin{remark}
The result we prove in Section \ref{subsec:expspace} is slightly more general. In \cite{ERW23}, membership in \EXPSPACE\ is only proved for RTSs whose transducers are length-preserving, while we prove it in general. General transducers allow one to model parameterised systems with process creation. For example, we can model a token passing algorithm in which the size of the array can dynamically grow and shrink by adding the transitions  $\big(\vtuple{n}{n}+\vtuple{t}{t}\big)^{+} \; \big(\vtuple{n}{\#} + \vtuple{\#}{n}  \big)$ to the transition relation of Example~\ref{ex:tokenpassing}.
\end{remark}

\subsection{\textsc{AbstractSafety} is in \EXPSPACE}
\label{subsec:expspace}
We first show that the set of all inductive constraints of a regular abstraction framework is a regular language. Fix a regular abstraction framework $\framework=(\configurations, \constraints, \interpretation)$ over an RTS $(\configurations, \trafun)$. Let $n_\trafun$, $n_\interpretation$, $n_I$, $n_U$ be the number of states of the transducers and NFAs of a given instance of \textsc{AbstractSafety}.

\begin{lemma}{\cite{ERW23}}
\label{lem:inductive}
The set $\overline{\Inductive}$ is regular.  Further, one can compute an NFA with at most $n_\trafun \cdot n_{\interpretation}^2$ states recognising $\overline{\Inductive}$, and a DFA with at most  $2^{{n_\trafun} \cdot n_{\interpretation}^2}$ states recognising $\Inductive$.
\end{lemma}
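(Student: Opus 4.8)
The plan is to characterise $\overline{\Inductive}$, the set of \emph{non}-inductive constraints, as the set of those constraints witnessing a violation of inductivity, and to show that this witnessing condition can be checked by composing the transducers and NFAs at hand. Recall that a constraint $\constraint$ is inductive iff $\trafun(\interpretation(\constraint)) \subseteq \interpretation(\constraint)$, equivalently iff there is \emph{no} pair of configurations $(c, c')$ with $c \in \interpretation(\constraint)$, $c' \notin \interpretation(\constraint)$, and $(c, c') \in \trafun$. So $\constraint \in \overline{\Inductive}$ iff such a ``bad transition'' $(c, c')$ exists. First I would build a transducer recognising the ternary relation of all triples $(\constraint, c, c')$ such that $c \in \interpretation(\constraint)$, $(c, c') \in \trafun$, and $c' \notin \interpretation(\constraint)$; then $\overline{\Inductive}$ is the projection of this relation onto its first component, which by the closure facts for transducers is again regular and recognised by an NFA.

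Concretely, I would proceed as follows. Since $\interpretation \subseteq \constraints \times \configurations$ is regular, so is its complement $\overline{\interpretation}$ (as a relation over $\Gamma^* \times \Sigma^*$), recognised by a transducer with the same number $n_\interpretation$ of states — here one uses that $\interpretation$ is given by a \emph{deterministic} transducer, which is why the problem statement insists on determinism, so that complementation does not incur an exponential blow-up. Then I would form the relation $\{(\constraint, c, c') \mid c \in \interpretation(\constraint)\} = \interpretation \times \Sigma^*$ (pairing the second component with an unconstrained third), the relation $\{(\constraint, c, c') \mid (c,c') \in \trafun\} = \Gamma^* \times \trafun$, and the relation $\{(\constraint, c, c') \mid c' \notin \interpretation(\constraint)\}$ obtained from $\overline{\interpretation}$ by matching its components against the first and third coordinates. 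Intersecting these three ternary regular relations gives a transducer with $O(n_\trafun \cdot n_\interpretation^2)$ states for the set of bad triples. Projecting onto the first coordinate yields an NFA with at most $n_\trafun \cdot n_\interpretation^2$ states for $\overline{\Inductive}$; complementing and determinising gives a DFA for $\Inductive$ with at most $2^{n_\trafun \cdot n_\interpretation^2}$ states.

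The one genuinely delicate point — and the step I expect to be the main obstacle — is the bookkeeping of the \emph{padding symbols} in the convolutions. The transducers for $\interpretation$ and for $\overline{\interpretation}$ operate on words over $\Gamma_\# \times \Sigma_\#$, and when $\trafun$ is not length-preserving the three words $\constraint$, $c$, $c'$ may all have different lengths, so one must carefully align the convolution $\big[{\constraint \atop c \atop c'}\big]$ and ensure that the component automata read the $\#$-padding consistently. One has to check that the intersection construction does not accidentally identify, say, a genuine letter of $c$ with padding in one of the factors, and that the projection correctly strips the padding of the second and third components while leaving $\constraint$ (which may itself be shorter and hence padded) intact. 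This is routine but error-prone, and it is exactly where the careful treatment in~\cite{ERW23} is needed; once it is done correctly, the state counts follow directly from the complexity facts for automata and transducer operations recalled in the preliminaries.
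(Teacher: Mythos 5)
Your proof is correct and essentially the same as the paper's: both characterise $\overline{\Inductive}$ as the projection onto the constraint component of the set of ``bad witnesses'' $(\constraint,c,c')$ with $c\in\interpretation(\constraint)$, $(c,c')\in\trafun$, $c'\notin\interpretation(\constraint)$, and both obtain the $n_\trafun\cdot n_\interpretation^2$ bound from the product of the three transducers (using determinism of $\interpretation$ to complement cheaply). The paper merely phrases your ternary intersection as the join $\big(\,(\interpretation\circ\trafun\circ\overline{\interpretation^{-1}})\cap \mathit{Id}_\Gamma\,\big)|_1$ and, like you, delegates the padding bookkeeping to the standard closure facts from the preliminaries.
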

\begin{proof}By definition, we have
\begin{align*}
  \overline{\Inductive}  & =  \{\constraint \in\Gamma^{*}\mid \exists c , c' \in \configurations \colon c \in \interpretation(\constraint), c'  \in \trafun(c), \text{ and } c' \notin \interpretation(\constraint) \} \\
   & =   \{ \constraint \in \Gamma^{*}  \mid  \exists c , c' \in \configurations \colon  (A,c) \in \interpretation, (c,c') \in  \trafun \text{ and } (c', \constraint) \in \overline{\interpretation^{-1}} \}
 \end{align*}
Let $\mathit{Id}_\Gamma = \{ (A,A) \mid \constraint \in \Gamma^*\}$. We obtain
$\overline{\Inductive}  = \big( \; \big(  \interpretation \circ  \trafun \circ \overline{\interpretation^{-1}} \big) \cap \mathit{Id}_\Gamma \; \big)\mid_1$. By the results at the end of Section~\ref{sec:prelims}, $\overline{\Inductive}$ is recognised by a NFA with $n_\interpretation \cdot n_\trafun \cdot n_\interpretation = n_\trafun n_\interpretation^2$ states, and so $\Inductive$ is recognised by a DFA with $2^{n_\trafun \cdot n_{\interpretation}^2}$ states.
\end{proof}

\begin{lemma}{\cite{ERW23}}
\label{lem:indcons}
The potential reachability relation $\potreach$ is regular. Further, one can compute a nondeterministic transducer with at most $K:=n_\interpretation^2 \cdot 2^{n_\trafun \cdot n_\interpretation^2}$ states recognising $\overline{\potreach}$, and a deterministic transducer with at most  $2^K$ states recognising $\potreach$.
\end{lemma}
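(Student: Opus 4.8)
The plan is to express $\overline{\potreach}$ as a regular relation obtained from the already-computed DFA for $\Inductive$ (Lemma~\ref{lem:inductive}) by a constant number of the operations whose cost was tabulated at the end of Section~\ref{sec:prelims}, and then to complement and determinise to get $\potreach$. Unfolding the definition, a pair $(c,c')$ lies in $\overline{\potreach}$ iff there is some $\constraint \in \Inductive$ with $c \in \interpretation(\constraint)$ and $c' \notin \interpretation(\constraint)$, i.e.\ iff there exists $\constraint$ such that $(\constraint,c) \in \interpretation$, $(\constraint,c') \notin \interpretation$, and $\constraint \in \Inductive$. So, writing things as a composition, I would first form the ternary relation $\{(\constraint,c,c') \mid (\constraint,c)\in\interpretation,\ (\constraint,c')\in\overline{\interpretation}\}$ and intersect it with $\Inductive \times \Sigma^* \times \Sigma^*$, then project away the $\constraint$ component.

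The key steps, in order: (i) take the DFA for $\interpretation$ (given, $n_\interpretation$ states) and the DFA for $\overline{\interpretation}$ (same $n_\interpretation$ states, just swap final/non-final, since $\interpretation$ is given deterministically); (ii) build a transducer on triples $\vtuple{\constraint}{\vtuple{c}{c'}}$ — more precisely over the convolution alphabet of $\Gamma,\Sigma,\Sigma$ — recognising $\{(\constraint,c,c') : (\constraint,c)\in\interpretation \wedge (\constraint,c')\in\overline{\interpretation}\}$ as a synchronised product of the two DFAs from (i), which has at most $n_\interpretation^2$ states; (iii) intersect this with the "ternary lift" of the DFA for $\Inductive$, i.e.\ the relation $\{(\constraint,c,c') : \constraint \in \Inductive\}$, which is recognised by a transducer with the same $2^{n_\trafun \cdot n_\interpretation^2}$ states as the DFA for $\Inductive$ (it just ignores the second and third tracks); the product has at most $n_\interpretation^2 \cdot 2^{n_\trafun \cdot n_\interpretation^2} = K$ states; (iv) project onto the last two components, which by the facts in Section~\ref{sec:prelims} does not increase the number of states, yielding a nondeterministic transducer for $\overline{\potreach}$ with at most $K$ states; (v) complement (exponential blow-up) to get $\potreach$, recognised by a transducer with at most $2^K$ states, which can be taken deterministic.

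The one genuine subtlety — and the step I expect to be the main obstacle to state cleanly — is the bookkeeping with the padding symbol $\#$ in the convolutions: the constraint $\constraint$, the configuration $c$, and the configuration $c'$ may all have different lengths, so "synchronising" the DFAs for $\interpretation$ on the $\vtuple{\constraint}{c}$ track and on the $\vtuple{\constraint}{c'}$ track requires being careful that the two copies read the $\constraint$-track in lockstep even when $c$ and $c'$ have been padded differently (or when $\constraint$ itself is shorter than one of them). This is a routine but fiddly reindexing of the convolution, exactly analogous to the handling of $\interpretation \circ \trafun \circ \overline{\interpretation^{-1}}$ in the proof of Lemma~\ref{lem:inductive}; I would phrase it as a single explicit transducer construction on the triple-convolution alphabet rather than as a chain of binary joins, so that the $\#$-alignment is handled once and for all. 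Everything else is a direct application of the closure-and-size facts from Section~\ref{sec:prelims}: intersection multiplies state counts, projection preserves them, complementation costs one exponential, and determinisation of the complement is free since we complement anyway.
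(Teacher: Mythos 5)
Your proposal is correct and follows essentially the same route as the paper: the paper writes $\overline{\potreach}$ as the relational join $\interpretation^{-1} \circ (\mathit{Id}_\Gamma \cap \Inductive^2) \circ \overline{\interpretation}$ of transducers with $n_\interpretation$, $2^{n_\trafun \cdot n_\interpretation^2}$, and $n_\interpretation$ states, which is exactly your ternary product followed by projection onto $(c,c')$ (a join \emph{is} implemented as product-plus-projection), giving the same bound $K$ before complementation and $2^K$ after. Your explicit remark about aligning the $\#$-padding across the three tracks is the same bookkeeping the paper delegates to the closure facts of Section~\ref{sec:prelims}, so there is no substantive difference.
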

\begin{proof}
By definition, we have
  \begin{align*}
 \overline{\potreach}  & =   \set{(c,c') \in \configurations \times \configurations  \mid \exists \constraint \in \Inductive \colon c \in \interpretation(\constraint)  \text{ and }  c' \notin \interpretation(\constraint)} \\
    & =  \set{(c,c') \in \configurations \times \configurations  \mid \exists \constraint \in \Inductive \colon (c, A) \in \interpretation^{-1} \text{ and } (A, c') \in \comp{\interpretation}}   
\end{align*}
Let $\mathit{Id}_\Gamma = \{ (A,A) \mid \constraint \in \Gamma^*\}$. We obtain $\comp{\potreach} = \big({\interpretation}^{-1} \circ (\textit{Id}_\Gamma \cap \Inductive^2) \circ \comp{\interpretation}   \big)$. Apply now the results at the end of Section~\ref{sec:prelims} and Lemma~\ref{lem:inductive}. 
\end{proof}

 
\begin{restatable}{theorem}{inEXPSPACE}{\cite{ERW23}}\label{thm:abstractP}
\textsc{AbstractSafety} is in \EXPSPACE.
\end{restatable}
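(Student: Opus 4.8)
The plan is to reduce \textsc{AbstractSafety} to a reachability/non-emptiness question over the automata produced by Lemma~\ref{lem:inductive} and Lemma~\ref{lem:indcons}, and to answer it by an on-the-fly (lazy) search that never writes down the doubly-exponential objects involved. First note that $\potreach(C_I) \cap C_U \neq \emptyset$ holds iff there exist $c \in C_I$ and $c' \in C_U$ with $(c,c') \in \potreach$, i.e.\ with $\vtuple{c}{c'} \notin \overline{\potreach}$. By Lemma~\ref{lem:indcons} one can compute a nondeterministic transducer $T$ with $K := n_\interpretation^2 \cdot 2^{n_\trafun n_\interpretation^2}$ states recognising $\overline{\potreach}$. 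The key structural observation is that a single state of $T$ is a triple consisting of a state of the $n_\interpretation$-state NFA for $\interpretation^{-1}$, a state of the DFA for $\Inductive$, and a state of the $n_\interpretation$-state NFA for $\comp{\interpretation}$; and by Lemma~\ref{lem:inductive} a state of the DFA for $\Inductive$ is just a subset of the state set of an NFA with only $n_\trafun n_\interpretation^2$ states recognising $\overline{\Inductive}$. Hence a state of $T$ is describable in polynomial space, and, given such a state and a letter, the set of its $T$-successors can be enumerated in polynomial space.

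Now run the following nondeterministic procedure. It guesses the convolution $\vtuple{c}{c'}$ letter by letter (with the usual $\#$-padding in the shorter of the two tracks), while maintaining three items: (i) the current state of the NFA for $C_I$ along $c$, using $O(\log n_I)$ space; (ii) the current state of the NFA for $C_U$ along $c'$, using $O(\log n_U)$ space; and (iii) the set $S$ of states in which a run of $T$ on the prefix read so far can be — i.e.\ it runs the subset construction for $T$, which determinises (hence complements) it. Item (iii) is a subset of the $K$-element state set of $T$; since $K$ is only singly exponential in the input and each of its elements needs polynomial space, $S$ fits in exponential space, and its successor under a guessed letter is computable in exponential space using the observation above. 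After guessing the whole word, the procedure accepts iff the $C_I$-component is final, the $C_U$-component is final, and $S$ contains no final state of $T$; the last condition says exactly that $\vtuple{c}{c'} \notin \overline{\potreach}$, i.e.\ $(c,c') \in \potreach$. Thus the procedure accepts iff $\potreach(C_I) \cap C_U \neq \emptyset$, and it uses exponential space. By Savitch's theorem nondeterministic exponential space equals \EXPSPACE, which is closed under complement, so \textsc{AbstractSafety} is in \EXPSPACE.

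The only delicate point — and the step I expect to be the real obstacle — is the space bookkeeping: one must resist materialising the deterministic transducer for $\potreach$ from Lemma~\ref{lem:indcons}, which has $2^K$ states and is therefore doubly exponential in the input, and instead complement the $K$-state transducer $T$ lazily through the subset construction above. This is harmless precisely because $K$ is only singly exponential, so a subset of $T$'s states still fits in exponential space, and the recursive unfolding (a state of $T$ contains a subset of a polynomial-size NFA for $\overline{\Inductive}$, which in turn is built from $\interpretation$ and $\trafun$ by the operations of Section~\ref{sec:prelims}) never forces us beyond exponential space. Everything else is a routine composition of the automata constructions recalled at the end of Section~\ref{sec:prelims}.
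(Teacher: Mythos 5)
Your proof is correct and follows essentially the same route as the paper: both apply Lemma~\ref{lem:indcons}, observe that a state of the (doubly-exponential) deterministic automaton for $\potreach$ composed with $C_I$ and $C_U$ can be stored in $O(K+\log n_I+\log n_U)$ space as a subset of the $K$-state transducer for $\overline{\potreach}$ together with two NFA states, and then guess an accepting run letter by letter, concluding via $\textsf{NEXPSPACE}=\EXPSPACE$. Your explicit on-the-fly subset construction is exactly the bookkeeping the paper's one-line space bound implicitly relies on.
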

\begin{proof}
Immediate consequence of Lemma~\ref{lem:indcons}, see \appref{app:inEXPSPACE}.
\end{proof}

\section{\textsc{AbstractSafety} is EXPSPACE-hard}
\label{sec:expspacecomp}
In~\cite{ERW22} it was shown that \textsc{AbstractSafety} was \PSPACE-hard, and the paper left the question of closing the gap between the upper and lower bounds open. We first recall and slightly alter the \PSPACE-hardness proof of~\cite{ERW22}, and then present our techniques to extend it to \EXPSPACE-hardness.

The proof is by reduction from the problem of deciding whether a Turing machine $\TM$ of size $n$ does not accept when started on the empty tape of size $n$. (For technical reasons, we actually assume that the tape has $n-2$ cells.) Given $\TM$, we construct in polynomial time an RTS $\system$ and a set of initial configurations $C_I$ that, loosely speaking, satisfy the following two properties: the execution of $\system$ on an initial configuration simulates the run of $\mathcal{M}$ on the empty tape, and $\potreach(C_I)=\reach(C_I)$. We choose $C_U$ as the set of configurations of $\system$ in which $\TM$ ends up in the accepting state. Then $\system$ is safe iff $\TM$ does not accept.

\subparagraph{Turing machine preliminaries.} We assume that $\TM$ is a deterministic Turing machine with states $Q$, tape alphabet $\Tapealph$, initial state $q_0$ and accepting state $q_f$.

We represent a configuration of $\TM$ as a word $\btape\,\beta\,q\,\eta$ of length $n$, where  $\TM$ is in state $q$, the content of the tape is $\beta\,\eta\in\Tapealph^*$, and the head of $\TM$ is positioned at the first letter of $\eta$. The symbol $\btape$ serves as a separator between different configurations. The initial configuration is $\alpha_0:=\btape q_0 \bk^{n-2}$, where $\bk$ denotes the blank symbol of $\TM$; so the tape is initially empty.

We assume w.l.o.g.\ that the successor of a configuration in state $q_f$ is the configuration itself, so the run of $\TM$ can be encoded as an infinite word $\Run:= \alpha_0 \alpha_1 \cdots$ where $\alpha_i$ represents the $i$-th configuration of $\TM$. For convenience, we write $\Confalph := Q \cup \Tapealph \cup \{\btape\}$ for the set of symbols in $\Run$. It is easy to see that the symbol at position $i+n$ of $\Run$ is completely determined by the symbols at positions $i-1$ to $i+2$ and the transition relation of $\TM$. We let $\delta(x_1x_2x_3x_4)$ denote the symbol which ``should'' appear at position $i+n$ when the symbols at positions $i-1$ to $i+2$ are $x_1x_2x_3x_4$; in particular, $\delta(x_1 \btape x_2 x_4) = \btape$.

\subparagraph{Configurations of $\system$.}
We choose the set of configurations as $\configurations := \alpha_0 \btape (\Confalph \cup \{\square\})^*$, and the initial configurations as $C_I := \alpha_0 \btape \square^*$. Intuitively, the RTS starts with the representation of the initial configuration of $\TM$, followed by some number of blank cells $\square$. During its execution, the RTS will ``write” the run of $\TM$ into these blanks.

A configuration is unsafe if it contains some occurrence of $q_f$, the accepting state of $\TM$, so $C_U:=(\Confalph\cup \{\square\})^*\{q_f\}(\Confalph \cup \{\square\})^*$.

\subparagraph{Transitions.} For convenience, we will denote the $i$-th position of a word $w$ as $w(i)$ instead of $w_i$. Given a configuration $c$, the set $\trafun(c)$ contains one single configuration $c'$, defined as follows. Let $i$ be some position of $c$ such that $c_{i+n} = \square$. Then $c'$ coincides with $c$ everywhere except at position $i+n$, where instead $c'(i+n):=\delta(c(i-1) c(i) c(i+1) c(i+2))$. It is easy to see that $\trafun$ is a regular relation: The transducer nondeterministically guesses the position $i-1$, reads the next four symbols, say $x_1...x_4$, stores $\delta(x_1...x_4)$ in its state, moves to position $i+n$, checks if $c(i+n)=\square$ and writes $c'(i+n):=\delta(x_1...x_4)$. The transducer has $\O(n^2)$ states.

\medskip\noindent It follows from the definitions above that $\TM$ accepts the empty word if{}f $\system$ can reach $C_U$ from $C_I$, i.e.\ $\reach(C_I)\cap C_U\ne\emptyset$.
    
\subparagraph{Regular abstraction framework.} We define a regular abstraction framework $\framework = (\configurations, \constraints, \interpretation)$ of polynomial size such that $\potreach(C_I) = \reach(C_I)$. Hence, for every configuration $c\notin\reach(C_I)$, we must find an inductive constraint $\constraint\in\constraints$ which separates $C_I$ and $c$. (Note that $C_I$ contains exactly one configuration of length $|c|$.)

As the reachable configurations are precisely the prefixes of $\Run$ with some symbols replaced by $\square$, there is a position $i$ s.t.\ $c(i) \notin \{\square,\Run(i)\}$. Let us fix the smallest such $i$. As we noted above, $\Run(i)$ is determined entirely by $\Run(i-n-1)...\Run(i-n+2)$ via the mapping $\delta$. So the constraint “if $c(i-n-1)...c(i-n+2)=x_1...x_4$, then $c(i)\in\{\square,\delta(x_1...x_4)\}$” is inductive and separates $C_I$ and $c$.

Therefore, it is sufficient to define an abstraction framework in which every constraint of the above form can be expressed. This is relatively straightforward. We set $\constraints:=\square^*\Confalph^4\square^*\Confalph\square^*$. Given a constraint $\constraint=\square^ix_1...x_4\square^jx\square^k$, define $\interpretation(\constraint)$ as the set of all configurations $c$ s.t.\ $c(i+1)...c(i+4)=x_1...x_4$ implies $c(i+j+5)\in\{\square,x\}$. Clearly, $\interpretation$ is a regular relation which can be recognised by a transducer with $3$ states.

\begin{theorem}{\cite{ERW22}}\label{thm:safetyPSPACE}
The abstract safety problem is \PSPACE-hard, even for regular abstraction frameworks where the transducer for the interpretation has a constant number of states. 
\end{theorem}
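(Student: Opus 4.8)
The plan is to make precise the informal argument already sketched in the excerpt: the reduction maps a Turing machine $\TM$ of size $n$ to the RTS $\system=(\configurations,\trafun)$ and abstraction framework $\framework=(\configurations,\constraints,\interpretation)$ constructed above, and claims that this is a correct, polynomial-time many-one reduction from the (\PSPACE-complete) problem ``does $\TM$ not accept started on a tape of $n-2$ cells?'' to \textsc{AbstractSafety}. Since the sizes of the transducer for $\trafun$ ($O(n^2)$ states), the NFAs for $C_I$ and $C_U$ ($O(n)$ states), and the transducer for $\interpretation$ ($3$ states, in particular constant) are all polynomial in $n$ and clearly computable in polynomial time, the only thing to verify is the correctness equivalence
\[
\TM \text{ does not accept} \quad\Longleftrightarrow\quad \potreach(C_I) \cap C_U = \emptyset .
\]

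First I would establish the ``concrete'' half: $\TM$ accepts iff $\reach(C_I)\cap C_U\neq\emptyset$. This is exactly the claim asserted in the paragraph ending ``i.e.\ $\reach(C_I)\cap C_U\ne\emptyset$'': one shows by induction on the length of the $\system$-run that the reachable configurations from $C_I$ are precisely the words $\alpha_0\btape w$ where $w$ is obtained from a prefix of $\Run$ (of the appropriate length) by replacing some symbols with $\square$, using that $\trafun$ lets one fill a blank at position $i+n$ with $\delta(c(i-1)c(i)c(i+1)c(i+2))$ and that $\delta$ faithfully computes the successor-cell function of $\TM$. Such a configuration contains $q_f$ iff the corresponding prefix of $\Run$ does, i.e.\ iff $\TM$ reaches $q_f$; so $\reach(C_I)\cap C_U\neq\emptyset$ iff $\TM$ accepts.

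Next comes the crucial half, $\potreach(C_I)=\reach(C_I)$; together with $\reach(C_I)\subseteq\potreach(C_I)$ (always true) this gives the equivalence, since then $\potreach(C_I)\cap C_U=\emptyset \Leftrightarrow \reach(C_I)\cap C_U=\emptyset \Leftrightarrow \TM$ does not accept. For $\potreach(C_I)\subseteq\reach(C_I)$ I would take any $c\notin\reach(C_I)$ and exhibit an inductive constraint $\constraint\in\Inductive$ separating the unique configuration $c_I\in C_I$ of length $|c|$ from $c$, which by definition of $\potreach$ shows $(c_I,c)\notin\potreach$, hence $c\notin\potreach(C_I)$. By the characterisation of $\reach(C_I)$ from the first step, $c\notin\reach(C_I)$ means either $c$ is not of the shape $\alpha_0\btape(\Confalph\cup\{\square\})^*$ of the right length (which contradicts $c\in\configurations$ and $|c|=|c_I|$, so cannot happen), or there is a position $i$ with $c(i)\notin\{\square,\Run(i)\}$; pick the smallest such $i$. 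Two sub-cases: if $i$ lies within $\alpha_0\btape$ (the first $n+1$ symbols), then $c$ disagrees with $c_I$ on a fixed symbol and one separates using a trivial one-symbol constraint of the right form; otherwise $i\geq n+2$, and since $i$ is minimal all of $c(i-n-1)\dots c(i-n+2)$ equal the corresponding symbols of $\Run$, say $x_1\dots x_4$, whence $\Run(i)=\delta(x_1\dots x_4)$. The constraint $\constraint=\square^{\,i-n-2}x_1x_2x_3x_4\,\square^{\,n-5}\,\delta(x_1\dots x_4)\,\square^{*}$-prefix (formally the element of $\constraints=\square^*\Confalph^4\square^*\Confalph\square^*$ with $\interpretation(\constraint)=\{c': c'(i-n-1)\dots c'(i-n+2)=x_1\dots x_4 \Rightarrow c'(i)\in\{\square,\delta(x_1\dots x_4)\}\}$) is satisfied by $c_I$ (either the guard fails, or $c_I(i)=\Run(i)=\delta(x_1\dots x_4)$) but not by $c$ (the guard holds and $c(i)\notin\{\square,\delta(x_1\dots x_4)\}$). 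It remains to check $\constraint\in\Inductive$, i.e.\ $\trafun(\interpretation(\constraint))\subseteq\interpretation(\constraint)$: a single $\trafun$-step changes exactly one $\square$ to $\delta$ of its four predecessors; if it changes position $i$ it writes precisely $\delta(x_1\dots x_4)$ when the guard holds (and anything is allowed when the guard fails), and if it changes any other position it does not affect membership in $\interpretation(\constraint)$ — one has to note that a step cannot change one of the guard positions $i-n-1,\dots,i-n+2$ from a non-$\square$ value, and that if it changes such a position from $\square$ to a non-$\square$ value, then in the resulting word position $i$ is still $\square$ (it was $\square$ or already correct before, and — here one uses that $\delta$ propagates $\btape$ and that positions are filled left to right in a consistent way) the implication still holds; alternatively, and more cleanly, observe that $\interpretation(\constraint)$ for this shape of constraint is inductive for the stated structural reason that $\delta$ computes exactly the successor relation, which is the content of ``the constraint [\dots] is inductive and separates $C_I$ and $c$'' in the excerpt. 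I would spell out this inductivity check carefully, as it is the one genuinely non-routine point.

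The main obstacle is precisely that last inductivity verification: one must argue that a $\trafun$-transition on a word in $\interpretation(\constraint)$ cannot create a violation of the implication ``$c(i-n-1)\dots c(i-n+2)=x_1\dots x_4 \Rightarrow c(i)\in\{\square,\delta(x_1\dots x_4)\}$''. The subtlety is that a transition writing at position $i$ writes $\delta(c(i-1)c(i)c(i+1)c(i+2))$, which is $\delta$ of the four predecessors of $i$, whereas the constraint refers to the four symbols at $i-n-1,\dots,i-n+2$; these coincide only because $\Run$ has period-$n$ block structure and $\delta$ is exactly the one-step simulation function, so one needs to invoke the definition of $\delta$ and of $\trafun$ to see that whenever the guard of the constraint holds in a reachable-shaped word, $\delta(c(i-1)\dots c(i+2))$ indeed equals $\delta(x_1\dots x_4)$ — or, to avoid relying on shape, one argues directly that the only way the guard can hold after a step is if it held before (transitions only turn $\square$'s into symbols, never the reverse, and never touch non-$\square$ cells), and then combine with the fact that the new symbol at $i$, if $i$ is the position written, is forced to the allowed value. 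Everything else in the reduction is bookkeeping about word shapes and automaton sizes.
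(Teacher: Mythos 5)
Your reduction is the same as the paper's (same RTS, same constraint language $\square^*\Confalph^4\square^*\Confalph\square^*$, same guarded\mbox{-}implication reading of a constraint), and you correctly isolate the inductivity of the separating constraint as the one non\mbox{-}routine step. But the argument you give for that step fails, and the gap is genuine. Your ``cleaner'' justification rests on the claim that \emph{the only way the guard can hold after a step is if it held before}. That is false: transitions turn $\square$'s into letters, and the guard demands specific letters at positions $i-n-1,\dots,i-n+2$, so a step can flip the guard from false to true. Concretely, take $c\in\configurations$ with $c(i-n-1)=\square$, $c(i-n)\,c(i-n+1)\,c(i-n+2)=x_2x_3x_4$, and $c(i)=z$ for some $z\notin\{\square,\delta(x_1\dots x_4)\}$. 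Since $x_1\in\Confalph$, the guard fails, so $c\in\interpretation(\constraint)$ under the implication semantics, whatever $z$ is. A $\trafun$-step targeting position $i-n-1$ is enabled (that cell is $\square$) and can write exactly $x_1$ there --- e.g.\ when $i-n-1=n+2$ the four cells it reads lie inside the fixed prefix $\alpha_0\btape$, so it necessarily writes $\Run(n+2)=x_1$. In the successor the guard holds while position $i$ still carries $z$, so the successor leaves $\interpretation(\constraint)$: the constraint is \emph{not} inductive. Your fallback (``position $i$ is still $\square$ because positions are filled left to right'') is also unsound, since the transducer picks the written position nondeterministically; and your closing paragraph additionally garbles the indexing (a step that \emph{writes} at $i$ reads positions $i-n-1,\dots,i-n+2$, which are precisely the guard positions, not $i-1,\dots,i+2$).

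The repair is not to argue harder but to change the constraint's semantics: $c\in\interpretation(\constraint)$ should mean ``$c(i)\in\{\square,\delta(x_1\dots x_4)\}$, \emph{or} $c(i-n-1)\dots c(i-n+2)\in\Confalph^4\setminus\{x_1\dots x_4\}$'', i.e.\ the escape clause requires the guard cells to be \emph{fully written} and different from $x_1\dots x_4$, rather than merely ``not equal to'' it (which is satisfied by a blank guard cell). This excludes the counterexample configuration from $\interpretation(\constraint)$ outright, and one checks easily that the resulting set is inductive and still separates $C_I$ from the chosen unsafe $c$; it is exactly how the paper's appendix defines $\interpretation_1$ in the \EXPSPACE\ construction (``$c(i+1)\dots c(i+4)\in\Confalph^4\setminus\{x\}$''), and it still needs only a constant\mbox{-}size transducer. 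Without this strengthening, the step ``$\constraint$ is inductive'' is simply wrong, and with it falls the key claim $\potreach(C_I)=\reach(C_I)$ on which your whole equivalence rests.
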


\begin{figure}[t]
\newcommand{\colorone}{\textcolor{nicepink}}
\newcommand{\colortwo}{\textcolor{nicepink!75!black}}
\newcommand{\m}[1]{\makebox[5mm]{$#1$}}

\[ \begin{aligned} 00 \quad 000 \quad &\m{\btape^0} \m{q_0^0} \m{\square^0} \m{\square^0} \m{\square^0} \m{\square^0} \m{\square^0} \m{\square^0} \m{\square^0} \m{\square^0}\\
\xrightarrow[]{\mathit{mark}(2,1)} 0{\colorone{1}} \quad 000 \quad &\m{\btape^{\colorone{1}}} \m{q_0^0} \m{\square^{\colorone{1}}} \m{\square^0} \m{\square^{\colorone{1}}} \m{\square^0} \m{\square^{\colorone{1}}} \m{\square^0} \m{\square^{\colorone{1}}} \m{\square^0}\\
\xrightarrow[]{\mathit{mark}(3,0)} 01 \quad {\colorone{1}}00 \quad &\m{\btape^1} \m{q_0^{\colorone{1}}} \m{\square^{\colortwo{1}}} \m{\square^0} \m{\square^{\colortwo{1}}} \m{\square^{\colorone{1}}} \m{\square^1} \m{\square^{\colorone{1}}} \m{\square^{\colortwo{1}}} \m{\square^0}\\
\xrightarrow[]{\mathit{init}} 00 \quad 000 \quad &\m{\btape^0} \m{q_0^0} \m{\square^0} \m{{\colorone{\bk}}^0} \m{\square^0} \m{\square^0} \m{\square^0} \m{\square^0} \m{\square^0} \m{\square^0}\\
\xrightarrow[]{\mathit{mark}(2,0)} {\colorone{1}}0 \quad 000 \quad &\m{\btape^0} \m{q_0^{\colorone{1}}} \m{\square^0} \m{\bk^{\colorone{1}}} \m{\square^0} \m{\square^{\colorone{1}}} \m{\square^0} \m{\square^{\colorone{1}}} \m{\square^0} \m{\square^{\colorone{1}}}\\
\xrightarrow[]{\mathit{mark}(3,0)} 10 \quad {\colorone{1}}00 \quad &\m{\btape^0} \m{q_0^{\colortwo{1}}} \m{\square^{\colorone{1}}} \m{\bk^1} \m{\square^{\colorone{1}}} \m{\square^{\colortwo{1}}} \m{\square^0} \m{\square^{\colortwo{1}}} \m{\square^{\colorone{1}}} \m{\square^1}\\
\xrightarrow[]{\mathit{init}} 00 \quad 000 \quad &\m{\btape^0} \m{q_0^0} \m{\square^0} \m{\bk^0} \m{\square^0} \m{\square^0} \m{{\colorone{\btape}}^0} \m{\square^0} \m{\square^0} \m{\square^0}\\
\xrightarrow[]{\cdots} 00 \quad 000 \quad &\m{\btape^0} \m{q_0^0} \m{{\colorone{\bk}}^0} \m{\bk^0} \m{{\colorone{\bk}}^0} \m{{\colorone{\bk}}^0} \m{\btape^0} \m{\square^0} \m{\square^0} \m{\square^0}\\
\xrightarrow[]{\cdots} {\colorone{1}}0 \quad 00{\colorone{1}} \quad &\m{\btape^{\colorone{1}}} \m{q_0^{\colorone{1}}} \m{\bk^0} \m{\bk^{\colorone{1}}} \m{\bk^{\colorone{1}}} \m{\bk^{\colorone{1}}} \m{\btape^{\colorone{1}}} \m{\square^{\colorone{1}}} \m{\square^0} \m{\square^{\colorone{1}}}\\
\xrightarrow[]{\mathit{write}} 00 \quad 000 \quad &\m{\btape^0} \m{q_0^0} \m{\bk^0} \m{\bk^0} \m{\bk^0} \m{\bk^0} \m{\btape^0} \m{\square^0} \m{\colorone{q_1^{\textcolor{black}{1}}}} \m{\square^0}\\
\xrightarrow[]{\cdots} 0{\colorone{1}} \quad 0{\colorone{1}}0 \quad &\m{\btape^{\colorone{1}}} \m{q_0^0} \m{\bk^{\colorone{1}}} \m{\bk^{\colorone{1}}} \m{\bk^{\colorone{1}}} \m{\bk^{\colorone{1}}} \m{\btape^{\colorone{1}}} \m{\square^0} \m{q_1^{\colorone{1}}} \m{\square^{\colorone{1}}}\\
\xrightarrow[]{\mathit{write}} 00 \quad 000 \quad &\m{\btape^0} \m{q_0^0} \m{\bk^0} \m{\bk^0} \m{\bk^0} \m{\bk^0} \m{\btape^0} \m{{\colorone{x}}^0} \m{q_1^0} \m{\square^0} \end{aligned} \]
\caption{A sample run of the regular transition system described in Example~\ref{ex:rtspspace}. Here, $\mathit{mark}(x,y)$ means that the $y$-th bit of the prime number $x$ is changed to 1, and thus every position not equivalent to $y \pmod x$ is unmarked. Note that the first position of the TM part (the one with $\btape$) is position 0. 
We write $x^y$ instead of $[{y \atop x}]$. We highlight bits and symbols that were written to in pink (bits which are unmarked by the \textit{mark} transition, but were already unmarked, are drawn in darker pink).
}\label{fig:rtspspace}
\end{figure}
\subsection{From \PSPACE-hardness to \EXPSPACE-hardness}\label{ssec:from-pspace-to-expspace}
In order to prove \EXPSPACE-hardness, we start with a machine $\TM$ of size $n$ and run it on a tape with $2^n$ cells. However, if we proceed exactly as in the \PSPACE-hardness proof, we encounter two obstacles: (1) The length of $\alpha_0$ is $2^n$, so our definitions of $\configurations$ and $C_I$ require automata of exponential size. (2) The transducer for the transition relation $\trafun$ needs to “count” to $2^n$, as this is the distance between the corresponding symbols of $\alpha_i$ and $\alpha_{i+1}$. Again, this requires an exponential number of states.

Obstacle (1) will be easy to overcome. Essentially, instead of starting the RTS with the entire initial configuration $\alpha_0$ of $\TM$ already in place, we set $C_I:=\btape\,q_0\,\square^*$ and modify the transitions of $\system$ to also write out $\alpha_0$.

However, obstacle (2) poses a more fundamental problem. On its face, it is easy to construct an RTS that can count to $2^n$ by executing multiple transitions in sequence, e.g.\ by implementing a binary counter. However, we need to balance this with the needs of the abstraction framework: if the RTS is too sophisticated, our constraints can no longer capture its behaviour using only regular languages.

We now sketch an RTS $\system'$ which extends the RTS $\system$ from the \PSPACE-hardness proof.

\subparagraph{A two-phase system.} In order to write the run of $\TM$, the RTS $\system'$ uses a “mark and write” approach. In a first phase, it executes $n$ transitions to mark positions with distance $\productofprimes$, where $\productofprimes\ge2^n$ is some fixed constant. Then, it nondeterministically guesses a marked position, reads and stores $4$ symbols from that position, and moves to the next marker to write the symbol according to $\delta$.

Let $p_1, \ldots, p_n$ be the first $n$ prime numbers (i.e.\ $p_1=2, p_2=3$, etc.). Define $\productofprimes:= \prod_{j=1}^n p_j$ and $s:= \sum_{j=1}^n p_i$. We have $\productofprimes \geq 2^n$ and, by the Prime Number Theorem, $s \in O(n^2 \log n)$.

The configurations of $\system'$ are of the form $\Markhead\,[{\Markers\atop c}]$, where $\Markhead\in\{0,1\}^s$ stores the current state of the mark phase, $\Markers\in\{0,1\}^*$ are the markers ($0$ means marked), and $c\in(\Confalph \cup \{\square\})^*$ is as for $\system$, with the reachable configurations being the prefixes of $\Run$ with some symbols replaced by $\square$. We refer to $[{\Markers\atop c}]$ as the \emph{TM part}.

The RTS has three kinds of transitions: $\trafun':=\trafun_{\mathit{mark}}\cup\trafun_{\mathit{write}}\cup\trafun_{\mathit{init}}$.

In the mark phase, $\system'$ executes a transition of $\trafun_{\mathit{mark}}$ for each $j\in[n]$. When executing such a transition, $\system'$ chooses a remainder $r \in [0,p_j-1]$ and sets the corresponding bit in $\Markhead$. It then unmarks every position in the TM part which is \emph{not} equivalent to $r$ modulo $p_j$ (by replacing the $0$ with a $1$). Hence, after executing $n$ transitions in $\trafun_\textit{mark}$, the positions of all 0's in the TM part are equivalent modulo every $p_j$. By the Chinese remainder theorem, these positions must also be equivalent modulo $\productofprimes$.

Afterwards, $\system'$ executes either a transition in $\trafun_{\mathit{write}}$ or $\trafun_{\mathit{init}}$. To execute $\trafun_{\mathit{write}}$, the RTS nondeterministically guesses a marked position $i$, reads $x:=c(i-1)...c(i+2)$, moves to the next marked position $i'$, and writes $\delta(x)$. 

As mentioned in obstacle (1) above, the RTS must write out the initial configuration of $\TM$. This is done by $\trafun_{\mathit{init}}$. If the first position of the TM part is not marked, the transducer moves to the first marked position and writes $\bk$, otherwise it moves to the second marked position and writes $\btape$. By executing this transition multiple times, eventually a configuration $\Markhead\,[{\Markers\atop c}]$ with $c=\btape q_0\bk^{\productofprimes-2}\btape\square^i$ can be reached.

While executing either $\trafun_{\mathit{write}}$ or $\trafun_{\mathit{init}}$, the transducer resets the mark phase state and marks all positions, i.e.\ the resulting configurations have $\Markhead=0^s$ and $\Markers \in 0^*$.

\begin{example}\label{ex:rtspspace}
Take $n = 2$. Here, we have $p_1 = 2$, $p_2 = 3$, $m = 6$ and $s = 5$. The set of initial and unsafe configurations is thus $C_I := L(0^5[{0\atop\btape}][{0\atop q_0}][{0\atop\square}]^*)$ and $C_U := \{0\}^5(\{0\}\times\Confalph)^*\{[{0\atop q_f}]\}(\{0\}\times\Confalph)^*$, respectively. In Figure~\ref{fig:rtspspace}, we give a possible run of the RTS for a TM with states $\{q_0,q_1,q_f\}$ ($q_0$ is initial, $q_f$ is final), and one transition from $q_0$, which reads $\bk$, moves the head to the right and goes to state $q_1$.
\end{example}

\subparagraph{The abstraction framework.}
If $\TM$ accepts, no constraint proving safety can exist, as an unsafe configuration is reachable. Consequently, when constructing the abstraction framework we only need to ensure that — provided $\TM$ does not accept — for every pair $(u,v)\in C_I\times C_U$ there is an inductive constraint separating $u$ and $v$.

The abstraction framework $(\configurations,\constraints,\interpretation)$ is the convolution of two independent parts, i.e.\ $\constraints:=\constraints_1\times\constraints_2$ and $\interpretation([{\constraint_1\atop\constraint_2}]):=\interpretation_1(\constraint_1)\cap\interpretation_2(\constraint_2)$.

\begin{figure}[t]
\newcommand{\m}[1]{\makebox[5mm][l]{$#1$}}

\begin{gather}
\begin{aligned}
\makebox[8mm][r]{$u=$}\quad\makebox[12mm]{$00 \quad 000$} \quad &\m{\btape^0} \m{q_0^0} \m{\square^0} \m{\square^0} \m{\square^0} \m{\square^0} \m{\square^0} \m{\square^0} \m{\square^0} \m{\square^0}\\
\makebox[8mm][r]{$v=$}\quad\makebox[12mm]{$00 \quad 000$} \quad &\m{\btape^0} \m{q_0^0} \m{\color{nicered}\bk^0} \m{\color{nicered}\bk^0} \m{\color{nicered}\bk^0} \m{\color{nicered}\bk^0} \m{\btape^0} \m{x^0} \m{q_1^0} \m{\color{nicered}q_f^0}\\
\end{aligned}\\[2pt]
\begin{aligned}
\makebox[8mm][r]{$\constraint_1=$}\quad\makebox[12mm]{$\square\;...\;\square$} \quad &\m{\square}\m{\square} \m{\bk}\m{\bk}\m{\bk}\m{\bk} \m{\square}\m{\square}\m{\square} \m{\bk}\\
\end{aligned}\\[2pt]
\begin{aligned}
\makebox[8mm][r]{$\constraint_2=$}\quad\makebox[12mm]{$01 \quad 100$}\, \quad &\m{0}\m{1}\m{0}\m{2}\m{0}\m{1}\m{0}\m{1}\m{0}\m{2}\\
\end{aligned}
\end{gather}
\caption{Constraints in Example~\ref{ex:rtspspace}. (1) Two configurations $u,v$, where $u\in C_I$, $v\in C_U$. (2) The (not necessarily inductive) constraint $\constraint_1$, separating $u,v$. (3) The matching inductive constraint $\constraint_2$. 
}\label{fig:rtspspace2}
\end{figure}

For every pair $(u,v)\in C_I\times C_U$ there will be a constraint $\constraint_1\in\constraints_1$ separating $u$ and $v$. This is similar to before: $v$ must contain an “error” somewhere, so our constraint will state “if $c(i-1)...c(i+2)=x$, then $c(i+\productofprimes)\in\{\delta(x),\square\}$”, for some $i,x$. (Depending on $v$ we instead may need $\constraint_1$ stating just “$c(i)\in\{\Run(i),\square\}$”.) Concretely, we set $\constraints_1:=\square^s\square^*(\Confalph^4\square^*\Confalph + \Confalph)\square^*$, so the constraint is represented by a word in $\square^*x\square^*\delta(x)\square^*$ (or a word in $\square^*\Run(i)\square^*$). An example is shown in Figure~\ref{fig:rtspspace2}.

This is enough to separate $u$ and $v$, as $v$ must contain an “error” somewhere (i.e.\ a deviation from $\Run$). But it is not inductive: We can take any configuration which has $c(i+\productofprimes)=\square$, but where the cells have not been marked correctly, s.t.\ executing $\trafun_{\mathit{write}}$ would write to position $i+\productofprimes$ after reading symbols $c(j-1)...c(j+2)$ with $j\ne i$. So the resulting configuration may have $c(i+\productofprimes)\ne\delta(x)$, which no longer fulfils $\constraint_1$.

We solve this issue via $\interpretation_2$. For the constraint $\constraint_1$ above there is going to be a constraint $\constraint_2\in\constraints_2$ s.t.\ the combination $\interpretation_1(\constraint_1)\cap\interpretation_2(\constraint_2)$ is inductive. Essentially, $\constraint_2$ will ensure that it is impossible to write to position $i+\productofprimes$ without reading from position $i$. (Note that for a particular constraint the value of $i$ is fixed.)

Let $\constraints_2:=\{0,1\}^s[0,n]^*$. Intuitively, a constraint $x\,y\in\constraints_2$ (where $|x|=s$) states: “if remainders for the first $j$ primes have been chosen according to $x$, then exactly the positions $k$ with $y(k)\ge j$ are marked, otherwise positions $k$ with $y(k)=n$ are unmarked”, where $j$ is the number of primes that have been chosen.

Again, the constraint $\constraint_1$ is only concerned with one position $i$. Moreover, there is only one sequence of remainders $r_1,...,r_n$ to choose for the $\trafun_{\mathit{mark}}$ transitions, s.t.\ position $i$ is marked (i.e.\ $r_j\equiv i\pmod{p_j}$). So for each position $k$ we can determine the index in the sequence of $\trafun_{\mathit{mark}}$ transitions at which position $k$ will first become unmarked. Concretely, we have $y(k):=\min\{j \mid k\not\equiv i\pmod{p_j}\} - 1$.

This constraint is inductive and, crucially, the intersection of $\constraint_1$ and $\constraint_2$ is inductive as well. Essentially, every $\trafun_{\mathit{mark}}$ transition either continues the sequence $r_1,...,r_n$, and then the positions must be marked precisely according to $y$, or at some point a different remainder has been chosen, and the position $i$ is unmarked and cannot be written to. 

To summarise, constraint $\constraint_1$ is sufficient to exclude any unsafe configuration and, in combination with $\constraint_2$, does so inductively. Therefore, if $\TM$ does not accept, then the RTS can be proven safe using the abstraction framework.

For the full proof, see \appref{app:expspace-hardness}.

\section{Learning regular sets of inductive constraints}
\newcommand{\subind}{\lang{H}}
\newcommand{\indhyp}{\lang{H}}

Recall the algorithm for \textsc{AbstractSafety} underlying Theorem \ref{thm:abstractP}. It computes an automaton recognising the set  $\Inductive$ of inductive constraints (Lemma \ref{lem:inductive}); uses this automaton to compute a transducer recognising the potential reachability relation $\potreach$ (Lemma \ref{lem:indcons}); uses  this transducer to compute  an automaton recognising $\potreach(C_I) \cap C_U$; and finally uses this automaton to check if $\potreach(C_I) \cap C_U$ is empty (Theorem \ref{thm:abstractP}). The main practical problem of this approach is that, while the automaton for $\overline{\Inductive}$ has polynomial size in the input, the automaton for $\Inductive$ can be exponential, and, while the automaton for $\overline{\potreach}$ has polynomial size in $\Inductive$, the size of the automaton for $\potreach$ can be exponential.

In practice one typically does not need all inductive constraints to prove safety. This can be illustrated even on the tiny RTS of Example \ref{ex:tokenpassing}.

\begin{example}
\label{ex:tokenpassingflows}
Consider the RTS of the token passing system of Example \ref{ex:tokenpassing}, where $\Sigma=\{t,n\}$,
and the second abstraction framework of Example \ref{ex:tokenpassingandtraps}, where $\Gamma= 2^\Sigma = \{ \emptyset, \{t\}, \{n\}, \{t,n\} \}$. Recall that in this abstraction framework a constraint is an \emph{exclusive disjunction} of literals, that is, a configuration satisfies the constraint if it satisfies \emph{exactly one} of its literals. The minimal DFA recognising all non-trivial inductive constraints was shown on the left of Figure \ref{fig:tokenpassingflows}. The set of inductive constraints $\{t\}\{t\}^*$ is satisfied by the configurations $n^*tn^*$, and so the DFA on the right is already strong enough to prove any safety property.
\end{example}

We present a learning algorithm that computes automata recognising increasingly large sets $\subind \subseteq \Inductive$ of inductive constraints until either $\subind$ is large enough to prove safety, or it becomes clear that even the whole set $\Inductive$ is not large enough. More precisely, recall that, by definition, we have $\potreach := \{ (c, c') \in \configurations \times \configurations \mid \forall \constraint \in \Inductive \colon c \in \interpretation(\constraint) \to c' \in \interpretation(\constraint) \}$. Given a set $\subind \subseteq \Inductive$, define the relation $\potreach_\subind$ exactly as $\potreach$, just replacing $\Inductive$ by $\subind$. Clearly, we have $\potreach_\subind \supseteq  \potreach$  and $\potreach_\Inductive = \potreach$.

\subsection{The learning algorithm}

Let $\system = (\configurations, \trafun)$ and $\framework=(\configurations, \constraints, \interpretation)$ be a regular transition system and a regular abstraction framework, respectively. Further, let $C_I$ and $C_U$ be regular sets of initial and unsafe configurations. The algorithm refines Angluin's algorithm $L^*$  for learning a DFA for the full set $\Inductive$ \cite{Angluin87,SteffenHM11}. Recall that Angluin's algorithm involves two agents, usually called  Learner and Teacher. Learner sends Teacher membership and equivalence queries, which are answered by Teacher according to the following specification:
\medskip

\begin{minipage}[t]{6cm}
\noindent \textbf{Membership Query}:
 \begin{itemize}
 \item Input: a constraint $\constraint \in \constraints$
 \item Output: $\checkmark$ if $\constraint \in \Inductive$, \\
 and $\times$ otherwise.
\end{itemize}
\end{minipage}
\begin{minipage}[t]{7cm}
\noindent \textbf{Equivalence Query}:
 \begin{itemize}
 \item Input: a DFA recognising a set $\lang{H} \subseteq \constraints$.
 \item Output: $\checkmark$ if $\lang{H} = \Inductive$, otherwise a constraint $\constraint \in (\lang{H} \setminus \Inductive) \cup (\Inductive \setminus \lang{H})$.
\end{itemize}
\end{minipage}

\medskip

\noindent  Angluin's algorithm describes a strategy for Learner guaranteeing that Learner eventually learns the minimal DFA recognising  $\Inductive$. The number of equivalence queries asked by Learner is at most the number of states of the DFA.

\subparagraph{Answering the queries.}  We describe the algorithms used by Teacher to answer queries. For membership queries, Teacher constructs an NFA for $\overline{\Inductive} \cap \{A\}$ with $O(|A| \cdot n_\trafun \cdot n_\interpretation^2)$ states (see Lemma \ref{lem:inductive}), and checks it for emptiness. 

For equivalence queries, Teacher proceeds as follows :
\begin{enumerate}
\item Teacher first checks whether $\lang{H}  \setminus \Inductive \neq \emptyset$ holds by computing an NFA recognising $\lang{H} \cap \overline{\Inductive}$ with $O(n_\lang{H} \cdot n_\trafun \cdot n_\interpretation^2)$ states (see Lemma \ref{lem:inductive}), and checking it for emptiness. If $\lang{H}  \setminus \Inductive$ is nonempty, then Teacher returns one of its elements.

\item Otherwise, Teacher constructs an automaton for $\potreach_{\lang{H}}(C_I) \cap C_U$ of size $O(2^{n_\interpretation^2 \cdot n_\lang{H}})$ and checks it for emptiness.  There are two cases:
\begin{enumerate}
\item If $\potreach_{\lang{H}}(C_I) \cap C_U = \emptyset$, then the system is safe; Teacher reports it and terminates.
In this case, the learning algorithm is aborted without having learned a DFA for $\Inductive$, because it is no longer necessary.
\item Otherwise, Teacher chooses an element $(c,c') \in \potreach_{\lang{H}} \cap (C_I \times C_U)$, and searches for an inductive constraint $\constraint$ such that $c \in \interpretation(\constraint)$ and $c' \notin \interpretation(A)$. We call this problem the \emph{separability problem}, and analyze it further in Section~\ref{sec:separation-problem}.
\end{enumerate}
\end{enumerate}

\subsection{The separability problem.}
\label{sec:separation-problem}
The \textsc{Separability} problem is formally defined as follows:

\begin{center}
\begin{minipage}{\textwidth}
Given:\ \begin{minipage}[t]{13cm} 
a nondeterministic transducer recognising a regular relation $\trafun \subseteq \Sigma^* \times \Sigma^*$;
a deterministic transducer recognising a regular interpretation $\interpretation$ over $\Gamma \times \Sigma$; and \\
two configurations $c, c' \in \configurations$
 \end{minipage} \\[0.2cm]
Decide:  is $c'$ separable from $c$, i.e.\ does there exist $\constraint \in \Inductive$ s.t.\ $c \in \interpretation(\constraint)$ and $c' \notin \interpretation(\constraint)$?
\end{minipage}
\end{center}

Contrary to \textsc{AbstractSafety},  the complexity of \textsc{Separability} is different for arbitrary transducers, and for length-preserving ones.

\begin{restatable}{theorem}{thmseparability}
\label{thm:sep}
\textsc{Separability} is \PSPACE-complete, even if $\trafun$ is length-preserving. If $\interpretation$ is length-preserving, then \textsc{Separability} is \NP-complete.
\end{restatable}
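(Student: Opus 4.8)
The plan is to attack the four bounds separately, reusing the automata-theoretic machinery of Lemmas~\ref{lem:inductive} and~\ref{lem:indcons}. For the two upper bounds, observe that $c'$ is separable from $c$ iff the language $L_{c,c'} := \Inductive \cap \interpretation^{-1}(c) \cap \overline{\interpretation^{-1}(c')}$ (i.e.\ the set of inductive constraints satisfied by $c$ but not by $c'$) is nonempty. The set $\interpretation^{-1}(c)$ is recognised by an NFA of size $O(|c|\cdot n_\interpretation)$ obtained by taking the product of the transducer for $\interpretation$ with the one-word automaton for $c$ and projecting, and similarly $\overline{\interpretation^{-1}(c')}$ is recognised by a DFA of size $O(2^{|c'|\cdot n_\interpretation})$ (determinise, then complement) --- but crucially, since $\interpretation$ is given by a \emph{deterministic} transducer, $\overline{\interpretation^{-1}(c')}$ is recognised by an NFA of size $O(|c'|\cdot n_\interpretation)$ already (run the deterministic transducer against $c'$, keep the product states, and make final exactly the non-accepting ones; this is an NFA only because of the padding nondeterminism, but still polynomial). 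Intersecting with the NFA of size $n_\trafun n_\interpretation^2$ for $\overline{\Inductive}$ from Lemma~\ref{lem:inductive} --- wait, we need $\Inductive$, not its complement --- so here is where the length-preserving hypothesis splits the argument.

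For the \PSPACE\ upper bound in the general case: we guess the constraint $\constraint$ symbol by symbol (its length is a priori unbounded, but a pumping argument on the product automaton below bounds it by an exponential, so the guessing counter fits in polynomial space). While guessing $\constraint$, we simultaneously simulate on the fly (i) the NFA for $\interpretation^{-1}(c)$, to check $c\in\interpretation(\constraint)$; (ii) the NFA for $\overline{\interpretation^{-1}(c')}$, to check $c'\notin\interpretation(\constraint)$; and (iii) a check that $\constraint\in\Inductive$. For (iii), $\constraint\in\Inductive$ means $\constraint\notin\overline\Inductive$, i.e.\ for \emph{all} $d,d'$ with $(\constraint,d)\in\interpretation$, $(d,d')\in\trafun$ we have $(\constraint,d')\in\interpretation$; this is a co-nondeterministic (universal) condition, which a \PSPACE\ machine handles by the usual $\NPSPACE=\PSPACE$ argument --- equivalently, run the NFA for $\overline\Inductive$ of Lemma~\ref{lem:inductive} and require it to \emph{reject} $\constraint$, which by Savitch / Immerman--Szelepcsényi we can do within \PSPACE\ while streaming $\constraint$. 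Putting the pieces together gives an \NPSPACE\ (hence \PSPACE) algorithm.

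For the \NP\ upper bound when $\interpretation$ is length-preserving: now every constraint $\constraint$ satisfied by $c$ has $|\constraint|=|c|$, so the search space is exponential but each candidate has polynomial length --- we simply guess $\constraint\in\Gamma^{|c|}$ and verify in polynomial time that $\constraint\in\Inductive$ (run the polynomial-size NFA for $\overline\Inductive$ on the single word $\constraint$ and check it rejects; this is a reachability/non-reachability check on a polynomial graph, doable deterministically in P), that $c\in\interpretation(\constraint)$, and that $c'\notin\interpretation(\constraint)$ (each a run of a transducer on two fixed words). Hence the problem is in \NP.

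For the matching lower bounds, the key obstacle --- and the part I expect to require the most care --- is the hardness reductions, since the trivial constraints and the interplay of inductiveness with separability make naive encodings fail. For \NP-hardness with length-preserving $\interpretation$, I would reduce from an \NP-complete problem such as a suitable tiling/SAT variant, engineering $\trafun$ and $\interpretation$ so that the inductive constraints of length $|c|$ correspond bijectively to satisfying assignments, with $c$ forced into every such constraint and $c'$ excluded by exactly the ``accepting'' ones; length-preservation of $\interpretation$ must be maintained throughout, which constrains the gadgets. For \PSPACE-hardness even with $\trafun$ length-preserving, I would adapt the \PSPACE-hardness construction behind Theorem~\ref{thm:safetyPSPACE}: there, safety of the RTS simulating a polynomial-space Turing machine was reduced to an intersection-of-inductive-invariants question, and separating a specific initial configuration $c$ from a specific accepting configuration $c'$ by an inductive constraint is essentially the same question phrased for a single pair. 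The subtlety is that \textsc{AbstractSafety} quantifies over \emph{all} pairs in $C_I\times C_U$ whereas \textsc{Separability} fixes one pair, so I would either pad the machine so that there is a canonical accepting configuration, or argue that the reduction in Theorem~\ref{thm:safetyPSPACE} already produces a single relevant pair. A final routine point in both lower bounds is to ensure the produced interpretation transducer can be made deterministic (only padding transitions need massaging, as in Figure~\ref{fig:interp}).
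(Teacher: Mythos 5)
Your upper bounds are essentially the paper's argument (guess the constraint, check the two membership conditions and non-membership in $\overline{\Inductive}$; this is $\NPSPACE=\PSPACE$ in general, and a polynomial-time verification once length-preservation of $\interpretation$ forces $|\constraint|=|c|$), and your \NP-hardness sketch is in the right spirit --- the paper uses 3-colourability, with inductive constraints corresponding exactly to proper colourings. The problem is the \PSPACE-hardness part, where your plan would fail.

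Adapting the construction behind Theorem~\ref{thm:safetyPSPACE} cannot work, for a reason your own \NP{} upper bound already exposes: that construction has a \emph{length-preserving} interpretation (constraints in $\square^*\Confalph^4\square^*\Confalph\square^*$ interpreted positionally against configurations of the same length), so any \textsc{Separability} instance it produces lies in \NP, and you would be proving $\NP=\PSPACE$. Concretely, once you fix a single pair $(c,c')$ of polynomial length, a length-preserving framework only admits polynomially many constraint positions, so you can only encode polynomially many steps of the Turing machine --- whereas a linear-space machine runs for exponentially many steps. Neither "padding to a canonical accepting configuration" nor "the reduction already produces a single relevant pair" addresses this; the issue is not which pair you pick but that no polynomial-length pair can carry the computation. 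The paper's actual reduction is built around a deliberately \emph{non}-length-preserving interpretation: constraints are words over $\Confalph^*$, and $\interpretation(\constraint)$ is the set of \emph{prefixes} of $\constraint$ (empty if $q_f$ does not occur in $\constraint$). The transition relation (which \emph{is} length-preserving, as the theorem requires) rewrites position $i+n$ as a function of positions $i-1,\dots,i+2$, and inductiveness then forces any constraint containing the initial configuration $c=\alpha_0$ to be a correct prefix of the run $\Run$ of the machine. Thus a separating inductive constraint for $c=\alpha_0$ and $c'=q_f$ exists iff some (possibly exponentially long) prefix of $\Run$ contains $q_f$, i.e.\ iff the machine accepts. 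The exponential length lives entirely in the witness constraint, not in the input --- this is the idea missing from your plan, and without it the \PSPACE{} lower bound does not go through.
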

\begin{proof}
See \appref{app:separability-hardness}.
\end{proof}

Most applications of regular model checking to the verification of parameterised systems, and in particular all the examples studied in~\cite{ERW22,ERW23}, have length-preserving transition functions and length-preserving interpretations. For this reason, in our implementation we only consider this case, and leave an extension for future research. Since \textsc{Separability} is  \NP-complete in the length-preserving case, it is natural to solve it by reduction to SAT. 
A brief description of the reduction is given in \appref{app:reductiontoSAT}.

\subsection{Some experimental results}
We have implemented the learning algorithm in a tool prototype, built on top of
the libraries \texttt{automatalib} and \texttt{learnlib}~\cite{IsbernerHS15} and the SAT solver \texttt{sat4j}~\cite{BerreP10}.
We compare our learning approach with the one of~\cite{ERW22}, which constructs automata for $\Inductive$ and $\potreach$ using the regular abstraction framework of Example~\ref{ex:traps}.
In the rest of this section we call these two approaches the \emph{lazy} and the \emph{direct} approach, respectively. We use the same case studies as~\cite{ERW22}. 
We compare the sizes of the DFA for the final hypothesis $\lang{H}$ and $\potreach_\lang{H}$ with the sizes of the DFA for $\Inductive$ and $\potreach$.
The results are available at~\cite{welzel_mohr_2024_12734991} and are shown in Table~\ref{fig:results}.

\begin{table}[t]
  \begin{center}
    \scalebox{0.7}{
    \begin{tabular}[t]{l|R{0.2cm}R{0.3cm}|l|R{0.3cm}R{0.6cm}|R{0.5cm}R{0.5cm}|c}
      \multicolumn{4}{c}{}& \multicolumn{2}{c}{Lazy} & \multicolumn{2}{c}{Direct}& \\
      System & $\size{C_{I}}$ & $\size{\trafun}$ & Pr. & $\size{\lang{H}}$ & $\size{\textit{PR}_\lang{H}}$ &  $\size{\Inductive}$ &  $\size{\textit{PR}}$ &  Re.\\
      \hline
      \multirow{2}{*}{Bakery} &
      \multirow{2}{*}{$3$} & \multirow{2}{*}{$5$} & 
         D & $1$ & $1$   &  \multirow{2}{*}{$9$} & \multirow{2}{*}{$8$} & $\checkmark$ \\
        &&&M & $4$ & $3$   &  & &  $\checkmark$ \\
      \hline
      \multirow{2}{*}{Burns} &
      \multirow{2}{*}{$1$} & \multirow{2}{*}{$6$} &  D & $1$ & $1$   &  \multirow{2}{*}{$10$}& \multirow{2}{*}{$6$} & $\checkmark$  \\
        &&&M & $5$ & $3$   &  & & $\checkmark$ \\
      \hline
      \multirow{2}{*}{Dijkstra} &
      \multirow{2}{*}{$2$} & \multirow{2}{*}{$17$} &
         D & $4$ & $4$   &  \multirow{2}{*}{$218$} & \multirow{2}{*}{$22$}  & $\checkmark$  \\
        &&&M & $11$ & $8$   & & & $\checkmark$ \\
      \hline
      Dijkstra &
      \multirow{2}{*}{$2$} & \multirow{2}{*}{$12$} &
         D & $9$ & $9$   &  \multirow{2}{*}{$47$} & \multirow{2}{*}{$17$} & $\checkmark$ \\
       (ring)   &&&M & $9$ & $7$   &  & & $\times$ \\
      \hline
      Dining &
      \multirow{2}{*}{$2$} & \multirow{2}{*}{$8$} & \multirow{2}{*}{D} & \multirow{2}{*}{$23$} &  \multirow{2}{*}{$18$} & \multirow{2}{*}{$86$} & \multirow{2}{*}{$19$}  &\multirow{2}{*}{$\nicefrac{2}{2}$} \\
        crypto. &&&& & & & &  \\
        \hline
      \multirow{2}{*}{Herman} &
      \multirow{2}{*}{$2$} & \multirow{2}{*}{$11$} &
         D & $3$ & $2$   &  \multirow{2}{*}{$8$} & \multirow{2}{*}{$7$}   &$\checkmark$\\
        &&& O & $1$ & $2$  & & & $\checkmark$ \\
      \hline
      Herman &
      \multirow{2}{*}{$2$} & \multirow{2}{*}{$3$} &
         D & $1$ & $2$   &  \multirow{2}{*}{$7$} & \multirow{2}{*}{$7$}   &$\times$\\
       (linear)  &&& O & $1$ & $2$   & && $\checkmark$\\   
      \hline
      \multirow{2}{*}{Israeli-Jafon} &
      \multirow{2}{*}{$3$} & \multirow{2}{*}{$10$} &
         D & $1$ & $4$    &  \multirow{2}{*}{$21$} & \multirow{2}{*}{$7$}   &$\checkmark$\\
        &&& O & $1$ & $4$   &  && $\checkmark$\\
      \hline
      \multirow{1}{*}{Token passing} &
      \multirow{1}{*}{$2$} & \multirow{1}{*}{$3$} &
        O & $4$ & $4$   & \multirow{1}{*}{$9$} & \multirow{1}{*}{$7$}  &$\checkmark$\\
      \hline
      \multirow{1}{*}{Lehmann-Rabin} &
      \multirow{1}{*}{$1$} & \multirow{1}{*}{$7$} &
         D & $5$ & $6$   &  \multirow{1}{*}{$29$} & \multirow{1}{*}{$13$} &$\checkmark$\\
      \hline
      \multirow{1}{*}{LR phils. 1} &
      \multirow{1}{*}{$1$} & \multirow{1}{*}{$11$} &
         D & $13$ & $14$   &  \multirow{1}{*}{$29$}  & \multirow{1}{*}{$15$} &$\times$ \\
      \hline
      \multirow{1}{*}{LR phils. 2} &
      \multirow{1}{*}{$1$} & \multirow{1}{*}{$11$} &
         D & $25$ & $11$   &  \multirow{1}{*}{$29$} & \multirow{1}{*}{$9$} &$\checkmark$ \\
      \hline
      \multirow{1}{*}{Atomic phils.} &
      \multirow{1}{*}{$1$} & \multirow{1}{*}{$8$} &
         D & $13$ & $9$    &  \multirow{1}{*}{$22$} & \multirow{1}{*}{$20$} &$\checkmark$ \\
      \hline
      \multirow{2}{*}{Mux array} &
      \multirow{2}{*}{$2$} & \multirow{2}{*}{$4$} &
         D & $1$ & $2$   &  \multirow{2}{*}{$7$} & \multirow{2}{*}{$8$}   &$\checkmark$\\
        &&&M & $3$ & $5$   & & & $\times$ \\
        \hline
      \multirow{2}{*}{Res. alloc.} &
      \multirow{2}{*}{$1$} & \multirow{2}{*}{$5$} &
         D & $5$ & $5$   &  \multirow{2}{*}{$9$} & \multirow{2}{*}{$8$}  &$\checkmark$ \\
        &&&M & $3$ & $3$   & & & $\times$\\
        \hline
          \end{tabular}
          \quad
     \begin{tabular}[t]{l|R{0.2cm}R{0.3cm}|l|R{0.3cm}R{0.6cm}|R{0.5cm}R{0.5cm}|c}
      \multicolumn{4}{c}{}& \multicolumn{2}{c}{Lazy} & \multicolumn{2}{c}{Direct}& \\
      System & $\size{\mathcal{C}_{I}}$ & $\size{\trafun}$ & Pr. & $\size{\lang{H}}$ & $\size{\textit{PR}_\lang{H}}$ &  $\size{\Inductive}$ &  $\size{\textit{PR}}$ &  Re. \\
       
      \hline
      \multirow{2}{*}{Berkeley} &
      \multirow{2}{*}{$1$} & \multirow{2}{*}{$9$} &
         D & $1$ & $1$   &  \multirow{2}{*}{$12$} & \multirow{2}{*}{$9$} &$\checkmark$ \\
        &&&O & $4$ & $4$ &  && $\nicefrac{2}{3}$ \\
      \hline
      \multirow{2}{*}{Dragon} &
      \multirow{2}{*}{$1$} & \multirow{2}{*}{$23$} &
         D & $1$ & $1$   &  \multirow{2}{*}{$37$} & \multirow{2}{*}{$11$} &$\checkmark$\\
        &&&O & $15$ & $7$ & &&  $\nicefrac{6}{7}$\\
      \hline
      \multirow{2}{*}{Firefly} &
      \multirow{2}{*}{$1$} & \multirow{2}{*}{$16$} &
         D & $1$ & $1$   & \multirow{2}{*}{$12$} & \multirow{2}{*}{$7$} &$\checkmark$\\
        &&&O & $4$ & $3$ & & & $\nicefrac{0}{4}$\\
      \hline
      \multirow{2}{*}{Illinois} &
      \multirow{2}{*}{$1$} & \multirow{2}{*}{$16$} &
         D & $1$ & $1$    &  \multirow{2}{*}{$18$} & \multirow{2}{*}{$14$}   &$\checkmark$\\
        &&&O & $4$ & $3$ & &&  $\nicefrac{0}{2}$\\
      \hline
      \multirow{2}{*}{MESI} &
      \multirow{2}{*}{$1$} & \multirow{2}{*}{$7$} &
         D & $1$ & $1$   &  \multirow{2}{*}{$8$} & \multirow{2}{*}{$7$}   &$\checkmark$\\
        &&& O & $4$ & $4$  & & & $\nicefrac{2}{2}$\\
      \hline
      \multirow{2}{*}{MOESI} &
      \multirow{2}{*}{$1$} & \multirow{2}{*}{$7$} &
         D & $1$ & $1$    &  \multirow{2}{*}{$15$} & \multirow{2}{*}{$10$} &$\checkmark$\\
        &&&O & $4$ & $4$ & &&  $\nicefrac{7}{7}$\\
      \hline
      \multirow{2}{*}{Synapse} &
      \multirow{2}{*}{$1$} & \multirow{2}{*}{$5$} &
         D & $1$ & $1$   &  \multirow{2}{*}{$8$} & \multirow{2}{*}{$7$} &$\checkmark$ \\
        &&&O & $2$ & $3$ &  &&  $\nicefrac{2}{2}$\\
      \hline
      \multicolumn{9}{c}{}
      \\
      \multicolumn{9}{c}{}
      \\
      \multicolumn{4}{c}{}& \multicolumn{2}{c}{Lazy} & \multicolumn{2}{c}{Direct}& \\
      System & $\size{\mathcal{C}_{I}}$ & $\size{\trafun}$ & Pr. & $\size{\lang{H}}$ & $\size{\textit{PR}_\lang{H}}$ &  $\size{\Inductive}$ &  $\size{\textit{PR}}$ &  Re. \\
      \hline
      Dijkstra &
      \multirow{2}{*}{$2$} & \multirow{2}{*}{$12$} &
        \multirow{2}{*}{M} & \multirow{2}{*}{$9$} & \multirow{2}{*}{$7$} &  & & \multirow{2}{*}{$\checkmark$} \\
        (ring) &  & & & & & &  \\
      \hline
      \multirow{1}{*}{LR phils. 1} &
      \multirow{1}{*}{$1$} & \multirow{1}{*}{$11$} &
         D & $34$ & $11$   &  &  &$\checkmark$ \\
        \hline
        \multirow{1}{*}{Mux array} &
        \multirow{1}{*}{$2$} & \multirow{1}{*}{$4$} &
          M & $5$ & $3$  & & &$\checkmark$\\
        \hline
        \multirow{1}{*}{Res. alloc.} &
        \multirow{1}{*}{$1$} & \multirow{1}{*}{$5$} &
         M & $5$ & $5$   &   &  &$\checkmark$ \\
         \hline
    \end{tabular}
  }
  \end{center}
 \caption{Comparison of the sizes of the automata computed by the lazy and direct approaches. In each table, the first three columns contain the name of the RTS and the sizes of the automata for $C_I$ and $\trafun$.
 The fourth column (Pr.) indicates the checked property, where D, M, and O stand for ``deadlock freedom'', ``mutex'' (at most one process in a given state), and ``other'' (custom properties of the particular RTS).
The next two columns give the results for the lazy approach: sizes of the DFAs for $\lang{H}$ and $\potreach_\lang{H}$ (abbreviated as $\textit{PR}_\lang{H}$), and the next two the same results for the direct approach.
The last column (Re.) indicates the result of the check: the property could be proved ($\checkmark$), could not ($\times$), or, in the case of multiple properties, how many of the properties were proved.}
  \label{fig:results}
\end{table}

The left table in Table~\ref{fig:results} shows results on RTSs modeling mutex and leader election algorithms, and academic examples, like various versions of the dining philosophers.
The right top table shows results on models of cache-coherence protocols.
Observe that $\Inductive$ and $\potreach$ do not depend on the property, but $\lang{H}$ and $\potreach_\lang{H}$ do, because the algorithm can finish early. In this case, the sizes given in columns $\lang{H}$ and $\potreach_\lang{H}$ are the largest
ones computed over all properties checked.

The main result is that the automata computed by our tool are significantly smaller than those for~\cite{ERW22}.
(Note that in all cases we compute minimal DFAs, and so the differences are not due to algorithms for the computation of automata.)
Observe that in five cases the deadlock-freedom and the mutex properties could not be proved. In one case (deadlock-freedom of Herman (linear)) this is because the property does not hold. 
In the other four cases, the problem is that~\cite{ERW22} uses only a specific regular abstraction framework, namely the one of Example~\ref{ex:traps}. We can prove the property
by refining the abstraction: we take the union of the ``disjunctive'' and the ``exclusive disjunctive'' abstractions of Example~\ref{ex:tokenpassingandtraps}. 
The  bottom-right table gives the results of these four cases. 

Both tools take less than three seconds in 54 out of the 59 case studies in the left and top right tables. We do not report the exact times; the implementation of~\cite{ERW22} uses MONA, while the experiments of this paper use \texttt{automatalib} and \texttt{learnlib}, and so small time differences may have any number of reasons. In the other five cases, the implementation of~\cite{ERW22} still needs less than one second, while our implementation takes minutes (more than ten minutes in two cases).  In these five cases the time performance is dominated by the SAT solver \texttt{sat4j}. We have not yet identified a pattern explaining why \texttt{sat4j} takes so much time, in particular the number and size of the formulas passed to it is similar to the other cases.

\section{Conclusions}

We have generalised the technique of~\cite{ERW22,ERW23} for checking safety properties of RTS to arbitrary regular abstraction frameworks. 
We have shown that the abstract safety problem is \EXPSPACE-complete, solving an open problem of \cite{ERW22,ERW23}, by means of a complex reduction of independent interest. For particular abstraction frameworks the complexity can be better.

We have used automata learning to design a lazy algorithm that stops when the inductive constraints computed so far are enough to prove safety. Its combination with other learning techniques, as those proposed in~\cite{Neider14,AbhayPhD,ChenHLR17,VardhanSVA04,NeiderJ13}, is a question for future research.

\bibliography{references.bib}

\appendix

\section{Operations on regular abstraction frameworks}\label{app:operations}

\operations*

\begin{proof}
For $i \in \{1,2\}$, let $(Q_i, \Gamma_i \times \Sigma, \delta_i, q_{0i}, F_i)$ be a deterministic transducer for $\interpretation_i$. By the product construction,
\[ (Q_1 \times Q_2, (\Gamma_1 \times \Gamma_2) \times \Sigma, \delta_\cap, (q_{01},q_{02}), F_1 \times F_2) \]
with
\[ \delta((q_1,q_2),\bigg[{(\gamma_1,\gamma_2) \atop a}\bigg]) = (\delta_1(q_1,a), \delta_2(q_2,a)) \]
is a deterministic transducer for $\interpretation_\cap$.

For union, assume that $\Gamma_1$ and $\Gamma_2$ are disjoint. In this case,
\[ (Q_1 \cup Q_2 \cup \{q_0, q_\emptyset\}, (\Gamma_1 \cup \Gamma_2) \times \Sigma, \delta_\cup, q_0, F_1 \cup F_2) \]
with
\[ \delta_\cup(q,a) := \begin{cases} \delta_i(q,a) &\text{ if } q \in Q_i \text{ and } a \in \Gamma_i \times \Sigma\\
\delta_i(q_{0i},a) &\text{ if } q = q_0 \text{ and } a \in \Gamma_i \times \Sigma\\
q_\emptyset &\text{ if } q = q_\emptyset \text{ and } a \in (\Gamma_1 \cup \Gamma_2) \times \Sigma \end{cases} \]
is a deterministic transducer for $\interpretation_\cup$.
\end{proof}

\section{\textsc{AbstractSafety} is in \EXPSPACE}\label{app:inEXPSPACE}

\inEXPSPACE*

\begin{proof}
By Lemma~\ref{lem:indcons}, one can effectively compute an NFA for $\potreach(C_I) = C_I \circ \potreach$ with $2^{K} \cdot n_I$ states, and so an NFA for $\potreach(C_I) \cap C_U$ with $2^K \cdot n_I \cdot n_U$ states. We give a nondeterministic algorithm for the abstract safety problem. The algorithm guesses step by step a run of this NFA leading to an accepting state. The memory required is the memory needed to store one state of the NFA. Since the DFA has $2^K \cdot n_I \cdot n_U$ states, the memory needed is $O(K +  \log n_I + \log n_U)$ which is exponential in the size of the input. So the abstract safety problem is in $\textsf{NEXPSPACE} = \EXPSPACE$.
\end{proof}

\section{\textsc{AbstractSafety} is \EXPSPACE-hard}\label{app:expspace-hardness}

We prove that \textsc{AbstractSafety} is \EXPSPACE-hard.
We briefly recall and make formally precise the definitions from Section~\ref{ssec:from-pspace-to-expspace}.

\subparagraph*{Turing machine.}
Let $\TM$ denote a deterministic Turing machine with tape alphabet $\Gamma'$, states $Q$, initial state $q_0$, final state $q_f$, and let $n:=|Q|$. We write $p_1,...,p_n$ for the first $n$ primes, and set $s:=\sum_ip_i$ and $\productofprimes:=\prod_ip_i$.

We reduce from the problem of deciding whether $\TM$ accepts when run on an empty tape of size $\productofprimes-2\ge 2^n$. (In the following, we shorten this to “$\TM$ accepts”.)

Configurations of $\TM$ are represented as described in Section~\ref{sec:expspacecomp}, so they are words of length $\productofprimes$ over the alphabet $\Confalph:=Q \cup \Tapealph \cup \{\btape\}$. We write $\alpha_i$ for the representation of the $i$-th configuration of $\TM$.

The infinite word $\Run:=\alpha_0\alpha_1...$ encodes the run of $\TM$ on an empty tape of size $\productofprimes-2$. From the transitions of $\TM$ we obtain a function $\delta:\Confalph^4\rightarrow\Confalph$ s.t.\ $\Run(i+\productofprimes)=\delta(\Run(i-1)...\Run(i+2))$ for all $i$.
We extend $\delta$ to a function $\delta:(\Confalph\cup\{\square\})^4\rightarrow\Confalph\cup\{\square\}$ by mapping every word $x\in(\Confalph\cup\{\square\})^4\setminus\Confalph^4$ to $\square$.

\begin{observation}\label{obs:tmaccepts}
$\TM$ accepts iff there exists a word $c\in\Confalph^*$ s.t.\ $c$ begins with $\alpha_0\btape$, $c(i+\productofprimes)=\delta(c(i-1)...c(i+2))$ for all $i \geq 2$, and $c$ contains $q_f$.
\end{observation}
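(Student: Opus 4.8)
The plan is to prove Observation~\ref{obs:tmaccepts} by establishing a correspondence between words $c\in\Confalph^*$ satisfying the three stated conditions and prefixes of the run encoding $\Run$. The crux is the following claim: a word $c\in\Confalph^*$ beginning with $\alpha_0\btape$ satisfies $c(i+\productofprimes)=\delta(c(i-1)\ldots c(i+2))$ for all $i\ge 2$ \emph{if and only if} $c$ is a prefix of $\Run$. Once this claim is available, the observation follows immediately: $\TM$ accepts iff $\Run$ contains $q_f$ (since $\Run$ encodes the full run, and by our convention the accepting configuration repeats forever, so $q_f$ appears in $\Run$ exactly when some $\alpha_i$ is accepting), iff some finite prefix of $\Run$ contains $q_f$, iff there is a $c$ as in the claim containing $q_f$.

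For the "only if" direction of the claim I would argue by induction on the positions of $c$. The first $\productofprimes+1$ symbols of $c$ are fixed to $\alpha_0\btape$, which agree with $\Run$ since $\Run$ begins with $\alpha_0\alpha_1$ and $\alpha_1$ begins with $\btape$ (every configuration representation starts with the separator $\btape$). Now suppose $c(0)\ldots c(j-1)$ agree with $\Run$ for some $j\ge\productofprimes+1$; writing $j=i+\productofprimes$ with $i\ge2$, the recurrence gives $c(j)=\delta(c(i-1)\ldots c(i+2))=\delta(\Run(i-1)\ldots\Run(i+2))=\Run(j)$, where the last equality is the defining property of $\delta$ relative to $\Run$. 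Note we need $i+2 < j$, i.e.\ $i+2<i+\productofprimes$, which holds since $\productofprimes\ge 2^n\ge 2$. Hence $c$ is a prefix of $\Run$. The "if" direction is symmetric and in fact easier: if $c$ is a prefix of $\Run$, then $c$ begins with $\alpha_0\btape$ (as $\Run$ begins with $\alpha_0\alpha_1$ and $\alpha_1$ starts with $\btape$), and for every $i$ with $i+\productofprimes<|c|$ we have $c(i+\productofprimes)=\Run(i+\productofprimes)=\delta(\Run(i-1)\ldots\Run(i+2))=\delta(c(i-1)\ldots c(i+2))$, using that $i-1,\ldots,i+2$ are also valid positions of $c$ whenever $i+\productofprimes$ is.

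The main obstacle is being careful about boundary positions and about the extension of $\delta$ to alphabet $(\Confalph\cup\{\square\})^4$ versus its restriction to $\Confalph^4$: in Observation~\ref{obs:tmaccepts} we demand $c\in\Confalph^*$, so all arguments of $\delta$ are genuinely in $\Confalph^4$ and the $\square$-extension is irrelevant here (it matters only for the RTS semantics, not for this characterisation). A secondary subtlety is justifying that $\delta$ is well-defined as a total function $\Confalph^4\to\Confalph$ — i.e.\ that the symbol at position $i+\productofprimes$ of $\Run$ really depends only on positions $i-1,\ldots,i+2$ — which is the standard locality property of Turing machine computations already asserted in Section~\ref{sec:expspacecomp} and which I would simply cite. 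Finally, one should note that the condition "$c$ contains $q_f$" together with $\TM$ being deterministic and the convention that $q_f$-configurations are self-looping means $q_f$ propagates through $\Run$, so its occurrence in any prefix is equivalent to $\TM$ eventually reaching $q_f$; this closes the argument.
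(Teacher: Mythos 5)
Your argument is correct, and it spells out exactly the reasoning the paper leaves implicit (the paper states Observation~\ref{obs:tmaccepts} without proof): the conditions pin $c(1)\ldots c(\productofprimes+1)$ to $\alpha_0\btape$ and the recurrence then forces, by induction, every later position to agree with $\Run$, so the admissible $c$ are precisely the prefixes of $\Run$ of length at least $\productofprimes+1$, and the equivalence with acceptance follows from the self-looping convention for $q_f$-configurations. The only nitpick is that your bound $\productofprimes\ge 2$ is not quite what the induction step needs ($i+2\le j-1$ requires $\productofprimes\ge 3$), but this holds anyway since $\productofprimes-2\ge 2^n$ in the reduction.
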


\subparagraph*{Regular transition system.}
We define an RTS $\system'=(\configurations,\trafun)$. The configurations are $\configurations:=\{0,1\}^s(\{0,1\}\times\Confalph)^*$. For a configuration $u=\Markhead\,[{\Markers\atop c}]\in\configurations$ we refer to $\Markhead$ as the \emph{prime part} and $[{\Markers\atop c}]$ as the \emph{TM part}.

We say that prime $p_j$ is \emph{selected} in $u$ if $\Markhead(p_1+...+p_{j-1}+i)=1$ for some $i\in[p_j]$. We write $J(u):=\max(\{0\}\cup\{j:p_j\text{ is selected in }u\})$ for the index of the largest selected prime.

Within the TM part, we say that a symbol $[{0 \atop x}]$ is \emph{marked}, and $[{1 \atop x}]$ is \emph{unmarked}. One \emph{marks} (or \emph{unmarks}) a symbol $[{b \atop x}]$ by replacing it with $[{0 \atop x}]$ (or $[{1 \atop x}]$). To \emph{write} $y$ to $[{b \atop x}]$, we replace it with $[{b \atop y}]$ if $x=\square$, otherwise we leave it unchanged.

Let us now construct the transitions of $\system'$. As described in Section~\ref{ssec:from-pspace-to-expspace}, we set $\trafun':=\trafun_{\mathit{mark}}\cup\trafun_{\mathit{write}}\cup\trafun_{\mathit{init}}$.

We start with $\trafun_{\mathit{mark}}$, which is shown in Figure~\ref{fig:markandwrite}. The idea is that we pick some residual $r$ of some prime $p_j$, and unmark all positions that do \emph{not} have the same residual.

\newcommand{\Cgood}{C_\mathit{good}}
For $\trafun_{\mathit{write}}$, we split the description into two parts: a regular relation $\trafun_\mathit{write}'$ which implements the modification of the configuration, and a regular language $\Cgood$ describing invariants that should hold while executing $\trafun_\mathit{write}'$. The final relation is $\trafun_\mathit{write}:=\{(u,v)\in \trafun_\mathit{write}':u\in \Cgood\}$. In other words, the RTS only executes a write transition if some consistency conditions are met.

The transducer for $\trafun_\mathit{write}'$ is given in Figure~\ref{fig:markandwrite}. Informally, the language $\Cgood$ contains the configurations where the prime part has precisely one bit set for every prime. Writing $L_i\subseteq\{0,1\}^i$ for the binary words of length $i$ with exactly one $1$, we thus set $\Cgood:=L_{p_1}L_{p_2}...L_{p_n}(\{0,1\}\times\Confalph)^*$.

It remains to describe $\trafun_\mathit{init}$. Analogously to $\trafun_\mathit{write}$, we define a relation $\trafun_\mathit{init}'$ and only execute the transition if the input is in $\Cgood$. The relation is shown in Figure~\ref{fig:initandinterp}; it initialises the tape of $\TM$ by replacing the first marked position after $s+1$ with the appropriate symbol.

\begin{figure}
\hspace*{-0.05\textwidth}
\begin{minipage}{0.55\textwidth}
\begin{algorithm}[H]
\KwIn{$u\in \configurations$}
\KwOut{$v\in \configurations$ such that $[{u \atop v}] \in \trafun_\mathit{mark}$}
$v:=u$\;
$j:=J(u)+1$\;
pick $r\in[0,p_j-1]$\; \label{line:m-pick}
$v(p_1+...+p_{j-1}+1+r):=1$\;
\For{$k=1,2,...$}{
    \If{$k\not\equiv r\pmod{p_j}$}{
	   unmark $v(s+k)$\;
	}
}
\end{algorithm}
\end{minipage}
\hspace*{-0.05\textwidth}
\begin{minipage}{0.54\textwidth}
\begin{algorithm}[H]
\KwIn{$u\in \configurations$}
\KwOut{$v\in \configurations$ such that $[{u \atop v}] \in \trafun_\mathit{write}'$}
$v:=u$\;
$v(1)...v(s):=0^s$\;
pick $i\ge s+2$ s.t.\ $u(i)$ is marked\;
$[{m \atop x}]:=u(i-1)...u(i+2)$\; \label{line:w-read}
pick smallest $i'>i$, s.t.\ $u(i')$ is marked\; \label{line:w-pick2}
mark $v(s+1),v(s+2),...$\;
write $\delta(x)$ to $v(i')$\; \label{line:w-write}
\end{algorithm}
\end{minipage}
\caption{Pseudo-code representation of the transducers for $\trafun_\mathit{mark}$ and $\trafun_\mathit{write}'$.}\label{fig:markandwrite}
\end{figure}
Finally, we give the sets of initial and unsafe configurations: $C_I := L(0^s[{0\atop\btape}][{0\atop q_0}][{0\atop\square}]^*)$, $C_U:=0^s(\{0\}\times\Confalph)^*\{[{0\atop q_f}]\}(\{0\}\times\Confalph)^*$.

\begin{lemma}\label{lem:unsafeifaccepts}
If $\TM$ accepts, $\reach(C_I)\cap C_U\ne\emptyset$.
\end{lemma}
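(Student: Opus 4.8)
The statement to prove is Lemma~\ref{lem:unsafeifaccepts}: if $\TM$ accepts, then $\reach(C_I)\cap C_U\neq\emptyset$. The plan is to use Observation~\ref{obs:tmaccepts}, which gives us a finite word $c\in\Confalph^*$ beginning with $\alpha_0\btape$, satisfying $c(i+\productofprimes)=\delta(c(i-1)\cdots c(i+2))$ for all $i\ge 2$, and containing $q_f$. The goal is to exhibit a concrete sequence of transitions of $\system'$ starting from the unique configuration of $C_I$ of the appropriate length that produces a configuration in $C_U$, namely one whose TM part equals $\big[{0\cdots0\atop c'}\big]$ where $c'$ agrees with $c$ on every position except possibly having $\square$ in some places — but in fact we will produce $c$ exactly (or a prefix of it long enough to contain $q_f$), with all markers set to $0$ and prime part $0^s$.

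\textbf{Key steps.} First, I would start from the initial configuration $0^s\big[{0\atop\btape}\big]\big[{0\atop q_0}\big]\big[{0\atop\square}\big]^{N}$ of the correct total length $N+2$, with $N$ chosen at least as large as the position of the first $q_f$ in $c$ (and at least $\productofprimes$ so that there is room to write). Then I would describe one \emph{round} of transitions that writes a single desired symbol of $c$ into its position: (i) execute $n$ transitions of $\trafun_{\mathit{mark}}$, picking for the $j$-th transition the residue $r_j := (i-1-s)\bmod p_j$ — wait, more precisely the residue so that exactly the positions congruent to the target position modulo each $p_j$ remain marked; after all $n$ marks, by the Chinese Remainder Theorem the set of marked positions in the TM part is exactly $\{\,k : k \equiv t \pmod{\productofprimes}\,\}$ for the chosen target residue class $t$, and the resulting configuration lies in $\Cgood$ (exactly one bit set per prime); (ii) then execute one transition of $\trafun_{\mathit{write}}$ (if the target position is the $\btape$/$\bk$ region already handled) or $\trafun_{\mathit{init}}$ to write the next symbol of $\alpha_0$, or $\trafun_{\mathit{write}}$ to write a symbol of $\alpha_i$ for $i\ge 1$; this transition reads the four symbols at positions $i-1,\ldots,i+2$, moves to the next marked position $i' = i+\productofprimes$, writes $\delta$ of those four symbols there, resets the prime part to $0^s$, and re-marks all positions. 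I would argue that the consistency guard $\Cgood$ is satisfied at the moment we attempt the write, since step (i) leaves the prime part in $L_{p_1}\cdots L_{p_n}(\cdots)^*$. Iterating: first use $\trafun_{\mathit{init}}$ rounds to fill in $\alpha_0\btape = \btape q_0 \bk^{\productofprimes-2}\btape$, then use $\trafun_{\mathit{write}}$ rounds, position by position in increasing order of position, to fill in $c$ up through the position carrying $q_f$; each such write is consistent with the defining equation of $c$ because when we compute $\delta(c(i-1)\cdots c(i+2))$ the source symbols $c(i-1),\ldots,c(i+2)$ have already been written in earlier rounds (they lie at smaller positions). Finally, after the last round the configuration has prime part $0^s$, all markers $0$, and TM content a prefix of $c$ containing $q_f$, hence it is in $C_U$.

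\textbf{Main obstacle.} The technically delicate point is verifying that $\trafun_{\mathit{mark}}$, executed $n$ times with the right residues, really does leave \emph{exactly} the positions of one residue class mod $\productofprimes$ marked — this requires checking that the pseudo-code in Figure~\ref{fig:markandwrite} unmarks positions monotonically (once unmarked, a position stays unmarked through the remaining mark transitions of the round, since each transition only ever \emph{unmarks}), so that after the $j$-th transition the marked positions are exactly those congruent to the chosen residue modulo each of $p_1,\ldots,p_j$; combined with CRT this gives the single residue class mod $\productofprimes$ after all $n$ transitions. A secondary bookkeeping obstacle is keeping the lengths consistent: $C_I$ contains exactly one configuration of each length, transitions are length-preserving, so $N$ must be fixed in advance large enough that the write of the $q_f$-symbol lands within the configuration and that the "next marked position" $i'=i+\productofprimes$ exists; I would simply take $N$ to be, say, $\productofprimes$ times (one more than the number of $\TM$-configurations needed to reach $q_f$). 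Beyond these, the argument is a routine if somewhat tedious induction over rounds, tracking the invariant "after $k$ write/init rounds the TM content equals the relevant prefix of $\alpha_0\btape c$ (with the tail still $\square$), the prime part is $0^s$, and all markers are $0$."
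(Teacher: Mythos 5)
Your proposal is correct and follows essentially the same strategy as the paper's proof: write the run of $\TM$ one symbol at a time, where each symbol requires $n$ applications of $\trafun_{\mathit{mark}}$ with the residues $r_j\equiv i\pmod{p_j}$ determined by the target position (so that by the Chinese Remainder Theorem exactly one residue class modulo $\productofprimes$ stays marked), followed by a single $\trafun_{\mathit{init}}$ transition (for positions within $\alpha_0\btape$) or $\trafun_{\mathit{write}}$ transition (for later positions), which also resets the prime part and re-marks everything. The paper formalises this as an induction on intermediate configurations $u_i$ with TM content $\alpha(1)\ldots\alpha(i)\square^{l-i}$, which is exactly the invariant you state at the end.
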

\begin{proof}
As $\TM$ accepts, there is some $l$ with $\alpha(l)=q_f$. We show that we can move from any configuration $u_i=0^s\,[{0^l\atop c}]\in\configurations$ where $c=\alpha(1)...\alpha(i)\square^{l-i}$ to the configuration $u_{i+1}$. In other words, we can write the run of $\TM$ down one symbol at a time, until we reach $q_f$.

Clearly, $u_2\in C_I$ and $u_l\in C_U$, so this suffices to show the lemma.

Fix some $i$, and let $r_1,...,r_n$ denote the unique remainders with $r_j\in[0,p_j-1]$ and $r_j\equiv i+1\pmod{p_j}$ for $j\in[n]$. Starting from $u_i$, we execute $\trafun_{\mathit{mark}}$ a total of $n$ times, selecting remainder $r:=r_j$ during the $j$-th execution in line~\ref{line:m-pick}. Let $u'$ denote the resulting execution. In $u'$, position $u'(s+i+1)$ is marked.

We now have two cases. If $i+1-\productofprimes>1$, then $u'(s+i+1-\productofprimes)$ is marked as well, and we execute $\trafun_{\mathit{write}}$ to write $\alpha(i+1)=\delta(\alpha(i-\productofprimes)...\alpha(i-\productofprimes+3))$ to $u'(s+i+1)$, which uses $u'(s+i-\productofprimes)...u'(s+i-\productofprimes+3)=[{b\atop \alpha(i-\productofprimes)...\alpha(i-\productofprimes+3)}]$. Otherwise, executing $\trafun_{\mathit{init}}$ writes $\alpha(i+1)=\alpha_0(i+1)\in\{\bk,\btape\}$ to $u'(s+i+1)$.

In either case, the prime part is reset and all positions in the TM part are marked, so the resulting configuration is precisely $u_{i+1}$.
\end{proof}

\subparagraph*{Regular abstraction framework.}
Lemma~\ref{lem:unsafeifaccepts} shows that, if $\TM$ accepts, no constraint proving safety can exist. Consequently, when constructing the abstraction framework we only need to ensure that — provided $\TM$ does not accept — for every pair $(u,v)\in C_I\times C_U$ there is an inductive constraint separating $u$ and $v$.

As described in Section~\ref{ssec:from-pspace-to-expspace}, the abstraction framework $(\configurations,\constraints,\interpretation)$ is the convolution of two independent parts, i.e.\ $\constraints:=\constraints_1\times\constraints_2$ and $\interpretation([{\constraint_1\atop\constraint_2}]):=\interpretation_1(\constraint_1)\cap\interpretation_2(\constraint_2)$, where the constraints in $\constraints_1$ are sufficient to separate each pair of initial and unsafe configuration, but are not inductive, and the constraints in $\constraints_2$ are added to guarantee inductiveness.

\newcommand{\ConfAlt}{\configurations_l}
\subparagraph*{Non-inductive separation: $\interpretation_1$.}
We set $\constraints_1:=\square^s\square^*(\Confalph^4\square^*\Confalph + \Confalph)\square^*$. Let $\constraint\in\constraints_1$ denote a constraint, and $\ConfAlt$ the configurations of length $|\constraint|$. For $x\in\Confalph^*$ and index $i$, we write $C_i^x\subseteq\ConfAlt$ for the set of configurations $\Markhead\,[{\Markers\atop c}]$ s.t.\ $c(i+1)...c(i+|x|)=[{b\atop x}]$ for some $b\in\{0,1\}^{|x|}$.

If $\constraint=\square^{s+i}x\square^jy\square^k$, we set $\interpretation_1(\constraint)$ to the configurations $\Markhead\,[{\Markers\atop c}]$ of length $|\constraint|$ where $c(i+j+5)\in\{y,\square\}$ or $c(i+1)...c(i+4)\in \Confalph^4\setminus\{x\}$.

If $\constraint=\square^{s+i}y\square^k$, we simply set $\interpretation_1(\constraint)$ to the configurations with $c(i+1)\in\{y,\square\}$.

\begin{lemma}\label{lem:interpretation1-separating}
Let $(u,v)\in C_I\times C_U$. If $\TM$ does not accept, there exists an $\constraint\in\constraints_1$ with $u\in\interpretation_1(\constraint)$ and $v\notin\interpretation_1(\constraint)$.
\end{lemma}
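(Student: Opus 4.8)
The plan is to exploit the characterisation of $\TM$-acceptance in Observation~\ref{obs:tmaccepts}. Fix $(u,v)\in C_I\times C_U$. Since $u\in C_I$ and $v\in C_U$, both have the same length, say $N = s + \ell$, where the TM part has length $\ell$; write $v = 0^s\,[{0^\ell\atop c}]$ with $c\in\Confalph^*$ (recall $C_U$ forces the marker part to be $0^\ell$ and the TM part to be over $\Confalph$, containing $q_f$). The word $c$ begins with $\alpha_0\btape$ — wait, not necessarily; $v$ need only lie in $C_U$. So instead I would argue: \emph{either} $c$ does not begin with $\alpha_0\btape$, \emph{or} $c(i+\productofprimes)\ne\delta(c(i-1)\cdots c(i+2))$ for some $i\ge 2$. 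Indeed, if neither failure occurred, then $c$ would satisfy all three conditions of Observation~\ref{obs:tmaccepts} (it contains $q_f$ since $v\in C_U$), contradicting the assumption that $\TM$ does not accept. This case split is the heart of the argument.

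In the first case — $c$ deviates from $\alpha_0\btape$ at some position, say the smallest such index is $i$, so $c(i)\ne\Run(i)$ and $c(i)\neq\square$ (since $c\in\Confalph^*$, actually $c(i)\ne\square$ automatically) — I take the ``short'' constraint $\constraint := \square^{s+i-1}\,\Run(i)\,\square^{k}\in\constraints_1$ with $k$ chosen so that $|\constraint| = N$. Then $\interpretation_1(\constraint)$ consists of configurations whose TM-part position $i$ carries $\Run(i)$ or $\square$. The unique configuration in $C_I$ of length $N$ is $0^s[{0^\ell\atop \alpha_0\btape\square^{\ell-|\alpha_0\btape|}}]$, whose $i$-th TM symbol is $\Run(i)$ or $\square$ (since the reachable/initial configurations are prefixes of $\Run$ padded with $\square$), so $u\in\interpretation_1(\constraint)$; and $v\notin\interpretation_1(\constraint)$ because $c(i)\notin\{\Run(i),\square\}$. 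In the second case — $c(i+\productofprimes)\ne\delta(c(i-1)\cdots c(i+2))$ for some $i\ge 2$, and WLOG $c(i-1)\cdots c(i+2)\in\Confalph^4$ (if not, $\delta$ of it is $\square$, but $c(i+\productofprimes)\in\Confalph$; handle this sub-case by the short constraint on position $i+\productofprimes$ using $\Run(i+\productofprimes)$) — set $x := c(i-1)\cdots c(i+2)$ and take $\constraint := \square^{s+i-2}\,x\,\square^{j}\,\delta(x)\,\square^{k}$ with $j,k$ chosen so the positions line up ($\delta(x)$ sits at TM-position $i+\productofprimes$) and $|\constraint|=N$. Then $v\notin\interpretation_1(\constraint)$ since $c(i-1)\cdots c(i+2) = x$ but $c(i+\productofprimes)\notin\{\delta(x),\square\}$; and $u\in\interpretation_1(\constraint)$ because the TM part of $u$ is all $\square$ except for the prefix $\btape q_0$, so either its window at $i{-}1$ is not $x$ or its position $i+\productofprimes$ is $\square$ (this needs a small check that $\btape q_0$ sits at the very front and $i+\productofprimes \ge s+2$ is beyond it, which holds since $i\ge 2$).

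The main obstacle is bookkeeping with the offsets: the constraint language $\constraints_1$ starts with a mandatory $\square^s$ block (corresponding to the prime part), and the interpretation $\interpretation_1$ indexes into the TM part via $c(i+j+5)$, so one must carefully translate ``position $p$ of the TM part'' into the exponents of the $\square$-blocks in the constraint word, and verify that the chosen $i$ (a deviation index in $c$, which has length $\ell$) together with $\productofprimes$ keeps all referenced positions within $[1,\ell]$. The boundary sub-cases — when the deviation is within the first few symbols, when the $4$-symbol window $c(i-1)\cdots c(i+2)$ runs off the start, or when that window is not in $\Confalph^4$ — all reduce to the short-constraint case and should be dispatched uniformly by noting that $c$ deviates from $\Run$ somewhere and taking the smallest such index. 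I would present the short-constraint case first as a lemma-internal claim, then reduce the long case to it whenever the window is degenerate.
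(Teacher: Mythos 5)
Your proposal is correct and follows essentially the same route as the paper's proof: a case split driven by Observation~\ref{obs:tmaccepts}, a short constraint $\square^{s+i-1}x\square^{k}$ for a deviation from the prefix $\alpha_0\btape$, and a long constraint $\square^{s+i-2}x\square^{\productofprimes-3}\delta(x)\square^{k}$ for a $\delta$-inconsistency, with $u$ satisfying the latter because its TM part is $\square$ at position $i+\productofprimes$. The degenerate sub-case you worry about (the window not lying in $\Confalph^4$) is vacuous since $v\in C_U$ forces $c\in\Confalph^*$, so no extra handling is needed.
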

\begin{proof}
Let $\Markhead\,[{\Markers\atop c}]:=v$. We have $v\in C_U$ and thus $c\in\Confalph^*$. As $\TM$ does not accept, $c$ must violate one of the conditions of Observation~\ref{obs:tmaccepts}. We do a case distinction on which.

If $c$ does not start with $\alpha_0\btape$, there is some $i$ with $c(i)\ne(\alpha_0\btape)(i)=:x$ and we set $\constraint:=\square^{s+i-1}x\square^{|c|-s-i}$. As $u\in C_I$, we have $u(i)\in\{[{0\atop \square}],[{0\atop x}]\}$, so $\constraint$ separates $u$ and $v$.

If there is an $i$ with $c(i+\productofprimes)\ne \delta(c(i-1)...c(i+2))=:y$, let $x:=c(i-1)...c(i+2)$. We choose $\constraint:=\square^{s+i-2}x\square^{\productofprimes-3}y\square^{|c|-s-i-\productofprimes}$. Using $u(i)=[{0\atop \square}]$, we immediately get $u\in\interpretation(\constraint)$.
\end{proof}

\begin{figure}
\hspace*{-0.05\textwidth}
\begin{minipage}{0.5\textwidth}
\begin{algorithm}[H]
\KwIn{$u\in \configurations$}
\KwOut{$v\in \configurations$ such that $[{u \atop v}] \in \trafun_\mathit{init}'$}
$v:=u$\;
$v(1)...v(s):=0^s$\;
\lIf{$u(s+1)$ is marked}{$x:=\btape$}
\lElse{$x:=\bk$}
pick smallest $i'>s+1$ s.t.\ $u(i')$ is marked\;
mark $v(s+1),v(s+2),...$\;
write $x$ to $v(i')$\;
\end{algorithm}
\end{minipage}
\hspace*{-0.05\textwidth}
\begin{minipage}{0.59\textwidth}
\begin{algorithm}[H]
\KwIn{$u\in \configurations$, $\constraint\in\constraints_2$}
\KwOut{whether $u\in\interpretation_2(\constraint)$}
$j:= J(u)$, $l:=p_1+...+p_j$\;
\If{$u(k)\ne\constraint(k)$ for some $k\in[l]$}{
    \For{$i>s$ s.t.\ $\constraint(i)=n$}{
        \lIf{$u(i)$ marked}{reject}
    }
    accept\; \label{line:i-acc-1}
}
\For{$k=s+1,...$}{
    \lIf{$\constraint(k)\ge j\nleftrightarrow u(k)$ marked}{reject}
}
accept\; \label{line:i-acc-2}
\end{algorithm}
\end{minipage}
\caption{Pseudo-code representation of the transducers for $\trafun_\mathit{init}$ and $\interpretation_2$.}\label{fig:initandinterp}
\end{figure}

\subparagraph*{Adding inductiveness: $\interpretation_2$.}
Let $\constraints_2:=\{0,1\}^s[0,n]^*$. We say that prime $p_j$ is \emph{selected} in a configuration $\Markhead\,[{\Markers\atop c}]\in\configurations$ if $\Markhead(p_1+...+p_{j-1}+i)=1$ for some $i\in[p_j]$. Figure~\ref{fig:initandinterp} gives a pseudo-code representation of $\interpretation_2$. Note that we allow $j=0$ in the first line, for the case that no prime is selected.

\newcommand{\ConstraintIndex}[2]{\constraint_2^{(#1,#2)}}
Let $i$ denote an index. We will now define a constraint $\ConstraintIndex{i}{l}\in\constraints_2$, which essentially encodes that the marking procedure works correctly for indices $i,i+\productofprimes,i+2\productofprimes,...$ of the TM part, if it is of length $l$. Let $r_1,...,r_n \in [0,p_j-1]$ denote the unique remainders s.t.\ $r_j\equiv i\pmod{p_j}$. We will refer to $r_1,...,r_n$ as the \emph{remainder sequence} of $i$. Let $x\in\{0,1\}^s$ be a prime part where exactly the bits corresponding to $r_1,...,r_n$ have been set, i.e.\ $x(p_1+...+p_{j-1}+1+r_j)=1$ for each $j$, and $x$ is $0$ elsewhere. Let $y\in[0,n]^l$ s.t.\ $y(k):=\min\{j \mid k\not\equiv i\pmod{p_j}\} - 1$ and $y(k) := n$ if $k \equiv i \pmod{p_j}$ for all $j \in [n]$. Then $\ConstraintIndex{i}{l}:=x\,y$.

We remark that positions $k$ with $\ConstraintIndex{i}{l}(k)=j$ have distance at least $p_1\cdot...\cdot p_j$; in particular, $...,y(i-\productofprimes),y(i),y(i+\productofprimes),...=n$.

\begin{lemma}\label{lem:interpretation2-inductive}
$\ConstraintIndex{i}{l}$ is inductive.
\end{lemma}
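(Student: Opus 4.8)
The plan is to show $\trafun(\interpretation_2(\ConstraintIndex{i}{l})) \subseteq \interpretation_2(\ConstraintIndex{i}{l})$ by checking each of the three kinds of transitions in $\trafun' = \trafun_{\mathit{mark}} \cup \trafun_{\mathit{write}} \cup \trafun_{\mathit{init}}$ separately. Write $\constraint := \ConstraintIndex{i}{l} = x\,y$ with $|x| = s$, where $x$ records the remainder sequence $r_1, \dots, r_n$ of $i$ and $y(k) = \min\{j \mid k \not\equiv i \pmod{p_j}\} - 1$ (with $y(k) = n$ when $k \equiv i$ modulo every $p_j$). Recall from the pseudo-code in Figure~\ref{fig:initandinterp} that a configuration $u = \Markhead\,[{\Markers\atop c}]$ of length $l$ lies in $\interpretation_2(\constraint)$ iff: (i) if $\Markhead$ agrees with $x$ on the first $l_j := p_1 + \dots + p_j$ bits (where $j = J(u)$), then a TM-position $s+k$ is marked iff $y(k) \ge j$; and (ii) otherwise (the prime part has already deviated from $x$), every TM-position $s+k$ with $y(k) = n$ is unmarked. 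The two accepting lines \ref{line:i-acc-1} and \ref{line:i-acc-2} correspond to these two cases.

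First I would handle $\trafun_{\mathit{mark}}$, which is the substantive case. Suppose $u \in \interpretation_2(\constraint)$ and $[{u\atop v}] \in \trafun_{\mathit{mark}}$; this transition selects prime $p_{j+1}$ (where $j = J(u)$), picks a remainder $r \in [0, p_{j+1}-1]$, sets the corresponding bit, and unmarks all TM-positions $s+k$ with $k \not\equiv r \pmod{p_{j+1}}$. There are two subcases. If $u$ already satisfies condition (ii) — i.e.\ $\Markhead$ disagrees with $x$ on the first $l_j$ bits — then so does $v$ on the first $l_{j+1}$ bits (the prime part only grows), and $v$ satisfies (ii) because $\trafun_{\mathit{mark}}$ only unmarks positions: positions with $y(k) = n$ are already unmarked in $u$ and stay unmarked in $v$. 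If instead $u$ satisfies condition (i) with respect to the first $l_j$ bits, I distinguish whether the newly selected remainder $r$ equals $r_{j+1}$. If $r = r_{j+1}$, then $v$ agrees with $x$ on the first $l_{j+1}$ bits, and I must verify that a position $s+k$ is marked in $v$ iff $y(k) \ge j+1$: a position is marked in $v$ iff it was marked in $u$ (so $y(k) \ge j$ by (i)) and $k \equiv r_{j+1} \equiv i \pmod{p_{j+1}}$, and the latter together with $y(k) \ge j$ is exactly $y(k) \ge j+1$ by the definition of $y$. If $r \ne r_{j+1}$, then $v$ disagrees with $x$ on bit $l_j + 1 + r$, so $v$ must satisfy (ii): I need every position $s+k$ with $y(k) = n$ to be unmarked in $v$. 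Since $y(k) = n$ means $k \equiv i \pmod{p_{j+1}}$, i.e.\ $k \equiv r_{j+1} \not\equiv r \pmod{p_{j+1}}$, the transition $\trafun_{\mathit{mark}}$ unmarks exactly such a position. This is the crux of the argument — it is precisely the interplay that makes $\interpretation_1(\constraint_1) \cap \interpretation_2(\constraint_2)$ inductive in the overall construction.

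Next I would dispatch $\trafun_{\mathit{write}}$ and $\trafun_{\mathit{init}}$, which are routine. Both transitions reset the prime part to $0^s$ and mark every TM-position, producing a configuration $v$ with $J(v) = 0$; the only change in the TM part is writing a symbol to a single (already-marked) cell, which does not affect the marking bits. For $v$ with $J(v) = 0$, condition (i) with $j = 0$ requires every TM-position $s+k$ to be marked iff $y(k) \ge 0$, which holds vacuously since all positions are marked and $y(k) \ge 0$ always; so $v \in \interpretation_2(\constraint)$ via line~\ref{line:i-acc-2} regardless of $u$. (One should note this holds even if $u \notin \interpretation_2(\constraint)$, which is fine — inductiveness only constrains successors of configurations in the set, and here all successors land in the set anyway.) Finally I would observe that these cases are exhaustive since $\trafun = \trafun_{\mathit{mark}} \cup \trafun_{\mathit{write}} \cup \trafun_{\mathit{init}}$, completing the proof. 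The main obstacle is purely bookkeeping: carefully tracking, in the $\trafun_{\mathit{mark}}$ case, which of the two accepting conditions the successor configuration falls under and matching the unmark operation against the definition of $y$ via the Chinese-remainder structure.
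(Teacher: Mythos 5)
Your proof is correct and follows essentially the same route as the paper's: a case split over $\trafun_{\mathit{mark}}$, $\trafun_{\mathit{write}}$, $\trafun_{\mathit{init}}$, with the mark case further split by which acceptance condition $u$ satisfies and whether the chosen remainder matches $r_{J(u)+1}$. The only cosmetic difference is that where the paper does a three-way case analysis on $\ConstraintIndex{i}{l}(k)$ versus $j-1$, you give an equivalent direct "marked iff $y(k)\ge j+1$" argument.
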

\begin{proof}
Let $u,v\in\configurations$ with $u\in\interpretation_2(\ConstraintIndex{i}{l})$ and $(u,v)\in\trafun$. We have to show that $v\in\interpretation_2(\ConstraintIndex{i}{l})$.

We start with $\trafun_{\mathit{mark}}$. If $\interpretation_2$ accepts $u$ in line~\ref{line:i-acc-1}, then it will accept $v$ in the same way, as $\trafun_{\mathit{mark}}$ only selects primes $p_j$ with $j>J(u)$ and only unmarks cells. If it accepts $u$ in line~\ref{line:i-acc-2}, then $\trafun_{\mathit{mark}}$ will select a remainder $r_j$ where $j:=J(u)+1$. If $r_j\not\equiv i\pmod{p_j}$, then $v$ is accepted in line~\ref{line:i-acc-1} (every $k$ with $\ConstraintIndex{i}{l}(k)=n$ has $k-s\equiv i\pmod{p_j}$). Otherwise, let $k>s$. We consider three cases:
\begin{itemize}
\item If $\ConstraintIndex{i}{l}(k)<j-1$, then $u(k)$ is unmarked (as $u$ was accepted in line~\ref{line:i-acc-2}), and $v(k)$ is as well (as $\trafun_{\mathit{mark}}$ only unmarks cells).
\item If $\ConstraintIndex{i}{l}(k)=j-1$, then $k-s\not\equiv i\pmod{p_j}$ by definition of $\ConstraintIndex{i}{l}$, so $\trafun_{\mathit{mark}}$ unmarks $v(k)$.
\item If $\ConstraintIndex{i}{l}(k)>j-1$, then $u(k)$ is marked and $k-s\equiv i\pmod{p_j}$, which implies $u(k)=v(k)$.
\end{itemize}
We find that $\ConstraintIndex{i}{l}(k)>j-1$ iff $v(k)$ is marked, so $v$ will also be accepted in line~\ref{line:i-acc-2}.

The remaining transitions $\trafun_{\mathit{write}}$ and $\trafun_{\mathit{init}}$ are straightforward, as they reset the prime part and mark every cell.
\end{proof}

\subparagraph*{Concluding the proof.}
All that remains is combining $\interpretation_1$ and $\interpretation_2$.

\begin{lemma}\label{lem:separatorifnotaccept}
Let $(u,v)\in C_I\times C_U$. If $\TM$ does not accept, there exists an inductive $\constraint\in\constraints$ with $u\in\interpretation(\constraint)$ and $v\notin\interpretation(\constraint)$.
\end{lemma}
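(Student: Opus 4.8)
The plan is to combine the two previous lemmas by picking the constraint $\constraint_1 \in \constraints_1$ produced by Lemma~\ref{lem:interpretation1-separating} and pairing it with the matching constraint $\constraint_2 \in \constraints_2$ from the $\ConstraintIndex{i}{l}$ family, where $i$ and $l$ are exactly the index and length that appear in $\constraint_1$. Concretely: since $\TM$ does not accept, Lemma~\ref{lem:interpretation1-separating} gives a constraint $\constraint_1$ of the form $\square^{s+i-2}x\square^{\productofprimes-3}y\square^{k}$ (or the degenerate form $\square^{s+i-1}x\square^{k}$) that separates $u$ from $v$; let $l := |\constraint_1| = |u| = |v|$ be the common length. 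I would then set $\constraint := \vtuple{\constraint_1}{\ConstraintIndex{i}{l}}$, noting $|\ConstraintIndex{i}{l}| = l$ so the convolution is well-formed and $\constraint \in \constraints = \constraints_1 \times \constraints_2$, and $\interpretation(\constraint) = \interpretation_1(\constraint_1) \cap \interpretation_2(\ConstraintIndex{i}{l})$ by definition.

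Separation is immediate: $u \in \interpretation_1(\constraint_1)$ by Lemma~\ref{lem:interpretation1-separating}, and $u \in \interpretation_2(\ConstraintIndex{i}{l})$ because $u \in C_I$ has prime part $0^s$ (no prime selected, so $j = 0$, $l' = 0$) and all TM cells marked, which means the loop in the pseudo-code of $\interpretation_2$ checks only positions $k$ with $\ConstraintIndex{i}{l}(k) \ge 0$ — all of them — against "$u(k)$ marked", and this holds since every cell of a configuration in $C_I$ is marked; hence $\interpretation_2$ accepts $u$ at line~\ref{line:i-acc-2}. Therefore $u \in \interpretation(\constraint)$. On the other side, $v \notin \interpretation_1(\constraint_1)$ by Lemma~\ref{lem:interpretation1-separating}, so $v \notin \interpretation_1(\constraint_1) \cap \interpretation_2(\ConstraintIndex{i}{l}) = \interpretation(\constraint)$.

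Inductiveness is the substantive part, and it is where I expect the real work. We know from Lemma~\ref{lem:interpretation2-inductive} that $\interpretation_2(\ConstraintIndex{i}{l})$ is inductive, but $\interpretation_1(\constraint_1)$ is not, so inductiveness of the intersection does not follow formally; one must show $\trafun(\interpretation_1(\constraint_1) \cap \interpretation_2(\ConstraintIndex{i}{l})) \subseteq \interpretation_1(\constraint_1) \cap \interpretation_2(\ConstraintIndex{i}{l})$ directly. Take $(u', v') \in \trafun$ with $u'$ in the intersection; by Lemma~\ref{lem:interpretation2-inductive}, $v' \in \interpretation_2(\ConstraintIndex{i}{l})$, so it remains to show $v' \in \interpretation_1(\constraint_1)$. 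Here I distinguish the three transition types. For $\trafun_{\mathit{mark}}$, the TM part $c$ is unchanged (only markers change), so $c'(i+j+5)$ and $c'(i{+}1)\dots c'(i{+}4)$ equal their counterparts in $u'$ and $\constraint_1$ is preserved trivially. For $\trafun_{\mathit{init}}$, the only cell written is the first marked cell after position $s{+}1$, and since the constraint $\constraint_1$ has its "target" cell at TM-index $i+\productofprimes \ge \productofprimes > 2$ while $\trafun_{\mathit{init}}$ only affects the first two TM cells (when they are $\square$), the target is untouched; I should double-check the degenerate case where $\constraint_1 = \square^{s+i-1}x\square^k$ with small $i$, but there $u(i) = \square$ forces the constraint vacuously and written symbols only help. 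The crux is $\trafun_{\mathit{write}}$: a write transition reads cells around some marked position $\iota$ and writes $\delta(\cdot)$ to the next marked position $\iota'$. We must rule out the bad scenario where $\iota' - s = i + \productofprimes$ (the constraint's target cell) but $\iota - s \ne i$ (wrong source), which would let the RTS write something $\ne \delta(x)$ there. This is exactly what $\interpretation_2(\ConstraintIndex{i}{l})$ prevents: since $u'$ is in $\interpretation_2(\ConstraintIndex{i}{l})$ and $\trafun_{\mathit{write}} \subseteq \Cgood$ forces the prime part to encode exactly one remainder per prime, the set of marked TM-cells of $u'$ is precisely $\{k : k \equiv r \pmod{\productofprimes}\}$ for the unique $r$ determined by the prime part; combined with $\ConstraintIndex{i}{l}$ having its $n$-valued (always-marked) positions exactly at $\{k : k \equiv i \pmod{\productofprimes}\}$, if cell $i+\productofprimes$ is marked in $u'$ then $r \equiv i$, forcing $\iota - s \equiv i \pmod{\productofprimes}$, and then since $\iota'$ is the next marked cell after $\iota$ at the fixed spacing $\productofprimes$, $\iota' - s = i + \productofprimes$ forces $\iota - s = i$. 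Hence the read symbols are exactly $x = c(i{-}1)\dots c(i{+}2)$ and the written symbol is $\delta(x)$, so $v'$ still satisfies $\constraint_1$. (If cell $i+\productofprimes$ is already non-$\square$ in $u'$, it is unchanged; if $\iota' - s \ne i + \productofprimes$, cell $i+\productofprimes$ is untouched; either way $\constraint_1$ survives.) Assembling these cases gives $v' \in \interpretation(\constraint)$, completing the proof. The main obstacle is getting the indexing in this $\trafun_{\mathit{write}}$ argument exactly right — in particular the interplay between "marked cells are an arithmetic progression mod $\productofprimes$ once the prime part is in $\Cgood$" and "$\ConstraintIndex{i}{l}$ pins the progression to the residue of $i$" — and handling the boundary/degenerate cases of $\constraint_1$ cleanly.
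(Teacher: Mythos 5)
Your overall strategy is the paper's: pair the separating constraint $\constraint_1$ from Lemma~\ref{lem:interpretation1-separating} with $\constraint_2^{(i,l)}$ for the index $i$ and length $l$ read off from $\constraint_1$, get preservation of $\interpretation_2(\constraint_2)$ for free from Lemma~\ref{lem:interpretation2-inductive}, and argue preservation of $\interpretation_1(\constraint_1)$ per transition type; your separation argument and your $\trafun_{\mathit{write}}$ case --- which you correctly identify as the crux --- match the paper's proof. However, your treatment of $\trafun_{\mathit{init}}$ rests on a false premise. You claim that ``$\trafun_{\mathit{init}}$ only affects the first two TM cells'', so the target cell of $\constraint_1$ is untouched. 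That is not what $\trafun_{\mathit{init}}$ does: it writes to the \emph{first marked position after $s+1$}, and in a configuration whose markers are inconsistent with its prime part that position can be anywhere in the TM part --- including the target cell of $\constraint_1$. The repair goes through $\constraint_2$ exactly as in your $\trafun_{\mathit{write}}$ argument: if the target cell at distance $i+\productofprimes$ is marked in $u'$, then (using $u'\in C_{\mathit{good}}$ and $u'\in\interpretation_2(\constraint_2)$) the cell $\productofprimes$ positions earlier is marked as well and still lies strictly after position $s+1$, so the target is not the \emph{first} marked position and $\trafun_{\mathit{init}}$ does not write to it. As written, this step fails.

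The second gap is the degenerate case $\constraint_1=\square^{s+k_1}y\square^{k_2}$, which you dismiss with ``written symbols only help''. This is precisely the case where $\trafun_{\mathit{init}}$ \emph{does} write to the constraint's target position $i=s+k_1+1$, and it writes either $\btape$ or $\bk$ depending on whether position $s+1$ is marked; if it writes the wrong one of the two, $\interpretation_1(\constraint_1)$ is violated, so written symbols do not ``only help''. (Your remark that ``$u(i)=\square$ forces the constraint vacuously'' only concerns the initial configuration $u$, whereas inductiveness quantifies over all $u'\in\interpretation(\constraint)$.) One needs $\constraint_2$ to conclude that $u'(s+1)$ is marked iff $s+1=i-\productofprimes$ iff $y=\btape$, so that the symbol written is exactly $y$; and one must separately check that $\trafun_{\mathit{write}}$ does not touch this target, which holds because the target is then the first marked position after $s+1$ while $\trafun_{\mathit{write}}$ writes only to the second or a later marked position. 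Neither check appears in your plan, so the inductiveness argument is incomplete for this case.
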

\begin{proof}
Let $\constraint_1$ denote the constraint given by Lemma~\ref{lem:interpretation1-separating}. Based on the structure of $\constraint_1$, there are two cases. We remark that we rely on the specific construction in the proof of Lemma~\ref{lem:interpretation1-separating}.

\textbf{The first case.}
We have $\constraint_1=\square^{s+k_1}x\square^{k_2}y\square^{k_3}$ for some $k_1,k_2,k_3$ (note $k_2=\productofprimes-2$ and $y=\delta(x)$). Let $i:=k_1+4+k_2+1$ denote the position of $y$ within the TM part. We set $\constraint_2:=\ConstraintIndex{i}{|\constraint_1|}$ and $\constraint:=(\constraint_1,\constraint_2)$.

First, we show $u\in\interpretation(\constraint)$. Here, $u\in\interpretation_1(\constraint_1)$ follows from Lemma~\ref{lem:interpretation1-separating}, and $u\in\interpretation_2(\constraint_2)$ holds due to $u$ starting with $0^s$ and having all cells marked. Second, $v\notin\interpretation(\constraint)$ follows immediately from Lemma~\ref{lem:interpretation1-separating}.

To complete the first case, we must show that $\constraint$ is inductive, so let $(u',v')\in\trafun$ with $u'\in\interpretation(\constraint)$. By Lemma~\ref{lem:interpretation2-inductive} we have $v'\in\interpretation_2(\constraint_2)$, so it suffices to show $v'\in\interpretation_1(\constraint_1)$.

As $\trafun_{\mathit{mark}}$ only modifies the prime part and whether a cell is marked, $v'\in\interpretation_1(\constraint_1)$ follows from $u'\in\interpretation_1(\constraint_1)$ if $(u',v')\in\trafun_{\mathit{mark}}$.

For $(u',v')\in\trafun_{\mathit{write}}$, observe that only line~\ref{line:w-write} can cause $\constraint_1$ to be violated, and only if we write to $v(i)$ (recall that writing only changes a cell if it previously contained $\square$). For this to occur, $u'\in C_{\mathit{good}}$ must hold, i.e.\ the prime part of $u'$ has precisely one bit set for every prime. In particular, we have $J(u')=n$. If $u'(k)\ne\constraint_2(k)$ for some $k\in[s]$, constraint $\constraint_2$ implies that $u'(i)$ is not marked (note $\constraint_2(i)=n$). Otherwise, $\constraint_2$ implies that $u'(k)$ is marked iff $\constraint_2(k)=n$. So both $u'(i-\productofprimes)$ and $u'(i)$ are marked, and there are no marked cells between them. Hence, $\trafun_{\mathit{write}}'$ will read $u'(i-\productofprimes)$ in line~\ref{line:w-read}. Let $[{b \atop x'}] := u'(i-\productofprimes-1)...u'(i-\productofprimes+2)$. There are three possibilities: If $x'=x$, then $v'(i)=\delta(x)=y$. If $x'\ne x$ and $x'\notin\Confalph^4$, then $\square$ appears in $x'$ and $v'(i)=\delta(x)=\square$. Otherwise, $x'\ne x$ and $x'\in\Confalph^4$. In all cases, we get $v'\in\interpretation_1(\constraint_1)$.

Now we consider $(u',v')\in\trafun_{\mathit{init}}$. We again observe that $\constraint_2$ can only be violated if $\trafun_{\mathit{init}}$ writes to $u'(i)$. Similarly to before, we use $u'\in\Cgood$ to conclude that $u'(i)$ is only marked if $u'(i-\productofprimes)$ is marked, but we have $i-\productofprimes>s+1$ from the definition of $\constraint_1$.

\textbf{The second case.}
This case is both simpler than the first, and in large part analogous to it. We have $\constraint_1=\square^{s+k_1}y\square^{k_2}$ with $k_1\le \productofprimes+1$. Again, we write $i:=s+k_1+1$ for the position of $y$ and set $\constraint_2:=\ConstraintIndex{i}{|\constraint_1|}$ and $\constraint:=(\constraint_1,\constraint_2)$. Observe that $y=\btape$ if $i=s+\productofprimes+1$ and $y=\bk$ otherwise.

Similar to before, the only difficult part is showing that $\constraint$ is inductive w.r.t.\ $\trafun_{\mathit{write}}$ and $\trafun_{\mathit{init}}$; we define $u',v'$ as before. For both types of transitions we again find that only writing to $u'(i)$ can violate $\constraint_1$. Using $u'\in\Cgood$ we then have that $i$ is the first marked position after $s+1$ (using $i-\productofprimes\le s+\productofprimes-\productofprimes$). So $\trafun_{\mathit{write}}$ in fact does not write to $i$ (line~\ref{line:w-pick2} picks a larger index). For $\trafun_{\mathit{init}}$ we have that $u'(s+1)$ is marked iff $s+1=i-\productofprimes$ iff $y=\btape$. So $\trafun_{\mathit{init}}$ writes $y$ to $u'(i)$, and constraint $\constraint_2$ holds.
\end{proof}

This yields the desired theorem.

\begin{theorem}
\textsc{AbstractSafety} is \EXPSPACE-hard.
\end{theorem}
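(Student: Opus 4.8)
The plan is a polynomial-time reduction from the acceptance problem for exponentially space-bounded Turing machines, which is \EXPSPACE-complete: given a deterministic machine $\TM$ with $n$ states, decide whether $\TM$ accepts when started on an empty tape of size $\productofprimes-2\ge 2^n$, where $\productofprimes=\prod_{j=1}^n p_j$ is the product of the first $n$ primes. By Observation~\ref{obs:tmaccepts}, this is equivalent to the existence of a word over $\Confalph$ beginning with $\alpha_0\btape$, locally consistent with $\delta$ at distance $\productofprimes$, and containing $q_f$. So it suffices to exhibit an RTS and a regular abstraction framework, all of polynomial size, such that $\TM$ accepts if and only if $\potreach(C_I)\cap C_U\neq\emptyset$.

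First I would set up the RTS $\system'=(\configurations,\trafun)$ described above: configurations carry a \emph{prime part} of length $s=\sum_j p_j$ recording which residues have been selected, together with a \emph{TM part} built from marked/unmarked symbols of $\Confalph\cup\{\square\}$, and $\trafun=\trafun_{\mathit{mark}}\cup\trafun_{\mathit{write}}\cup\trafun_{\mathit{init}}$. The key to polynomiality is that a $\trafun_{\mathit{mark}}$ transition only needs to count a residue modulo a single prime $p_j$ at a time, so its transducer has $O(s)=O(n^2\log n)$ states by the Prime Number Theorem, and $\trafun_{\mathit{write}}$, $\trafun_{\mathit{init}}$ only move between consecutive marked cells while reading or writing a window of bounded size, hence are also polynomial. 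Soundness is Lemma~\ref{lem:unsafeifaccepts}: if $\TM$ accepts, one writes the run of $\TM$ symbol by symbol, each symbol obtained by performing the $n$ mark transitions that isolate the correct residue class — so that, by the Chinese remainder theorem, the marked positions of the TM part are exactly those $\equiv i\pmod{\productofprimes}$ — followed by one $\trafun_{\mathit{write}}$ or $\trafun_{\mathit{init}}$ step, eventually reaching a configuration in $C_U$.

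The heart of the reduction is completeness: if $\TM$ does not accept, then $\potreach(C_I)\cap C_U=\emptyset$, i.e.\ every pair $(u,v)\in C_I\times C_U$ is separated by an inductive constraint. I would take the framework to be the convolution (regular and of polynomial size by Lemma~\ref{lem:operations}) of two frameworks. The first, $\framework_1$, has constraints of the form ``if $c(i-1)\cdots c(i+2)=x$ then $c(i+\productofprimes)\in\{\delta(x),\square\}$'' together with the degenerate form ``$c(i)\in\{\Run(i),\square\}$''; since $v\in C_U$ contains $q_f$ but $\TM$ does not accept, $v$ must deviate from $\Run$ somewhere, so one such constraint separates $u$ from $v$ (Lemma~\ref{lem:interpretation1-separating}). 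These constraints are \emph{not} inductive, because a badly marked configuration could write the wrong symbol to position $i+\productofprimes$; so $\framework_2$ adds, for the single index $i$ mentioned by the chosen $\constraint_1$, the constraint $\ConstraintIndex{i}{l}$ pinning down exactly how the marking must behave on the residue class of $i$. One shows $\ConstraintIndex{i}{l}$ is inductive (Lemma~\ref{lem:interpretation2-inductive}) by case analysis on the transition type — a mark transition either extends the correct residue sequence of $i$ or picks a wrong residue and thereby unmarks $i$, while $\trafun_{\mathit{write}}$ and $\trafun_{\mathit{init}}$ reset the prime part and re-mark everything — and then Lemma~\ref{lem:separatorifnotaccept} shows $\interpretation_1(\constraint_1)\cap\interpretation_2(\ConstraintIndex{i}{l})$ is inductive: the only way a write/init step could falsify $\constraint_1$ is by writing to position $i$, but then $u'\in\Cgood$ forces the marking around $i$ to be consistent, so $\trafun_{\mathit{write}}$ must read the window at $i-\productofprimes$ and hence writes a value in $\{\delta(x),\square\}$, and for $\trafun_{\mathit{init}}$ the index $i$ lies too far right to be written. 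Combining both directions, $\TM$ accepts iff $\potreach(C_I)\cap C_U\neq\emptyset$, and since the construction is polynomial, \textsc{AbstractSafety} is \EXPSPACE-hard.

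I expect Lemma~\ref{lem:separatorifnotaccept} to be the main obstacle. It requires the marking discipline encoded by $\framework_2$ to be at once strong enough that no transition can ``cheat'' by writing to the distinguished position $i$ from an incorrect source window, and weak enough that $\framework_2$ itself stays inductive and polynomially representable; the delicate point is handling configurations with a \emph{malformed} prime part, where the residue sequence of $i$ was broken only partway, and checking that in exactly those cases $i$ is already unmarked and so cannot be written to. Threading one distinguished position through both $\framework_1$ and $\framework_2$ is what makes this coexistence work, and it is the conceptual core of the construction.
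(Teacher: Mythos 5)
Your proposal follows essentially the same route as the paper's own proof: the same mark-and-write RTS with a prime part of length $s=\sum_j p_j$ exploiting the Chinese remainder theorem, the same soundness argument via simulating the run when $\TM$ accepts, and the same convolution of a non-inductive separating framework $\framework_1$ with a marking-discipline framework $\framework_2$ whose combination is inductive. You also correctly identify the delicate point (handling partially broken residue sequences so that the distinguished position is already unmarked and cannot be written to), which is exactly where the paper's Lemma~\ref{lem:separatorifnotaccept} does its case analysis, so nothing further is needed.
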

\begin{proof}
We have constructed an RTS $\system$ which has size polynomial in $n$. By Lemmata~\ref{lem:unsafeifaccepts} and~\ref{lem:separatorifnotaccept} an inductive constraint separating $C_I$ and $C_U$ exists if and only if $\TM$ does not accept.
\end{proof}

{
\renewcommand{\TM}{\mathcal{M}}
\renewcommand{\Tapealph}{\Gamma}
\renewcommand{\Confalph}{\Lambda}
\newcommand{\Sep}{\#}
\section{Complexity of the separability problem}\label{app:separability-hardness}

\thmseparability*
\begin{proof}

Membership in \PSPACE\ and in \NP\ respectively, are easy to see. In both cases we guess a constraint $\constraint$ that separates $c$ from $c'$, and then check that $c \in \interpretation(\constraint)$,  $c' \notin \interpretation(\constraint)$, and that $\constraint$ is inductive, which can be done by checking that $\constraint \notin \overline{\Inductive}$. Since the automaton for $\interpretation$ is part of the input, and a NFA recognising $\overline{\Inductive}$ can be constructed in polynomial time, given an input of size $n$ the checks can be carried out in $O(q(n))$ space and $O(|\constraint| \cdot p(n))$ time for some polynomials $p$ and $q$. This proves 
\textsc{Separability}$\in \NPSPACE = \PSPACE$. However, it does not prove \textsc{Separability}$\in \NP$ because the length of $\constraint$ may not be polynomial in $n$. However, if $\trafun$ and $\interpretation$ are length-preserving, then we necessarily have $|c|=|c'|=|\constraint|$. So the algorithm also runs in polynomial time.

\medskip \noindent\textbf{\PSPACE-hardness.} For the \PSPACE-hardness proof, we reduce  from the following problem: Given a deterministic Turing machine $\TM$ with $n$ states, does it accept when run on an empty tape with $n-2$ cells?

Let $Q$ denote the states of $\TM$, $q_0,q_f\in Q$ the initial and final state, respectively, and let $\Tapealph:=\{B,0,1\}$ denote its tape alphabet, with $B$ being the blank symbol. We represent a configuration of $\TM$ by a word $\Sep\,\beta\,q\,\eta$ (of length $n$), where $\Sep$ is a separator, $\beta,\eta$ encode the tape contents to the left and right of the head, respectively, and $q$ is the current state of $\TM$. Let $\Confalph:=Q\cup\Tapealph\cup\{\Sep\}$ be the symbols used for this. We encode the unique (infinite) run of $\TM$ by concatenating the encodings of each configuration, e.g.\ $\alpha:=\alpha_0\alpha_1...$, where $\alpha_i$ represents the $i$-th configuration of $\TM$.

Observe that $\alpha_0=\Sep\,q_0\,B^{n-2}$, and that the symbol $\alpha(i)$ is determined by the four symbols $\alpha(i-n-1)...\alpha(i-n+2)$ and the transition relation of $\TM$. We write $\delta(x_1...x_4)$ to denote this mapping (i.e.\ $\alpha(i)=\delta(\alpha(i-n-1)...\alpha(i-n+2))$). Finally, note that $\TM$ accepts iff $q_f$ occurs in $\alpha$. 

We will define a regular transition system $\system = (\configurations, \trafun)$ with configurations $c_I,c_U\in\configurations$ of initial and final configurations, as well as a regular abstraction framework $(\configurations, \constraints, \interpretation)$. We then need to show that there exists a constraint $\constraint\in\constraints$ separating $c_I$ and $c_U$, iff $\TM$ accepts.

Let us now give the formal description of the $\system$ and $(\configurations, \constraints, \interpretation)$. We set $\configurations:=\Confalph^*$. Let $c\in\configurations$. A transition of the RTS nondeterministically picks a position $i$, reads the symbols $c(i-1)...c(i+2)$ and then replaces $c(i+n)$ with $\delta(c(i-1)...c(i+2))$. 

For the abstraction framework, we set $\constraints:=\Confalph^*$. A constraint $\constraint\in\constraints$ is interpreted as follows: If $\constraint$ does not contain $q_f$, $\interpretation(\constraint):=\emptyset$. Otherwise, $\interpretation(\constraint)$ contains precisely the prefixes of $\constraint$. Clearly, this can be done by a transducer with a constant number of states.

Let $c_I:=\alpha_0$ and $c_U:=q_f$. We remark that there are many possible choices for $c_U$, indeed any word which is not a prefix of $\alpha$ works. Also note that the RTS cannot execute any transition starting in $c_I$ — however, the existence of those transitions still restricts which constraints are inductive. We make the following claim:

\medskip\noindent\textit{Claim:}  Let $\constraint\in\constraints$ denote an inductive constraint with $c_I\in\interpretation(\constraint)$. Then $\constraint$ is a prefix of $\alpha$ containing $q_f$.

\medskip\noindent\textit{Proof of the claim:}  If $q_f$ does not appear in $\constraint$, then $\interpretation(\constraint)=\emptyset$, contradicting $c_I\in\interpretation(\constraint)$. It follows that $q_f$ occurs in $\constraint$, and thus $c_I\in\interpretation(\constraint)$ implies that $c_I=\alpha_0=\Sep\,q_0\,B^n$ is a prefix of $\constraint$.

It remains to show that $\constraint$ is a prefix of $\alpha$, so assume the contrary. Then there is a position $i$ with $\constraint(i)\ne\alpha(i)$. We fix a minimal such $i$. From the previous paragraph we get $i>n$, and thus $\constraint(i)\ne\alpha(i)=\delta(\alpha(i-n-1)...\alpha(i-n+2))=\delta(\constraint(i-n-1)...\constraint(i-n+2))$. This proves the claim. 

We can now view $\constraint$ as a configuration of $\system$, with $\constraint\in\interpretation(\constraint)$. By executing the transition that writes to position $i$, we move to a configuration $c\ne\constraint$, using the previous inequality. But then $c\notin\interpretation(\constraint)$, so $\constraint$ would not be inductive. This proves the claim. 

We can now prove that the reduction is correct.  If an inductive constraint separating $c_I$ and $c_U$ exists, then $\TM$ accepts. Conversely, if $\TM$ accepts, let $\constraint$ denote any prefix of $\alpha$ that contains $q_f$. In particular, $\alpha_0$ is a prefix of $\constraint$, and so $c_I \in \interpretation(\constraint)$. Further, since by definition $\interpretation(\constraint)$ contains precisely the prefixes of $\constraint$, and $q_f$ is not a prefix of $\constraint$, we have 
$c_U \notin \interpretation(\constraint)$. So $\constraint$ separates $c_U$ from $c_I$. 

\medskip\noindent\textbf{\NP-hardness.}  For \NP-hardness, we reduce from the $3$-colourability problem for graphs. Let $G=(V, E)$ be a graph with $V = \{1, \ldots, n\}$. We construct in polynomial time an RTS, a abstraction framework, and two configurations $c, c'$ such that $c'$ is separable from $c$ if{}f $G$ is 3-colourable.

\smallskip\noindent\textit{The RTS.} We encode an edge $\set{i, j} \in E$ as a word in $\set{0, 1}^{n}$ by setting the $i$-th and $j$-th letter to $1$ while all others are $0$. Let $C_E \subseteq \set{0, 1}^{n}$ be the set of encodings of all edges. We define $\configurations = C_E \cup \{0^n\}$. We define $\trafun = C_E \times C_E$. Intuitively, the transition system allows us to ``move'' from an edge of $E$ to any other edge. Clearly, it is an RTS recognised by a transducer with four states.

\smallskip\noindent\textit{The abstraction framework.} Let $\constraints = \set{R, G, B}^n$ (where $R,G,B$ stand for red,  gree, and blue) and define the interpretation $\interpretation$ as follows $(x_{1} \; \ldots \; x_{n}, A_1 \ldots A_n) \in \interpretation$ if and only if  there are exactly two indices $i, j$ such that $x_{i} = x_{j} = 1$ and $\constraint_{i} \neq \constraint_{j}$. In other words, a constraint is satisfied by all edges whose endpoints have different colours under $A$. It follows that a constraint $\constraint$ is inductive if{}f it defines a colouring of $G$. 

\smallskip\noindent\textit{The configurations $c$ and $c'$.} Let $c$ be the encoding on an arbitrary edge of $E$, and let $c'=0^n$. 

\smallskip\noindent\textit{Correctness.} If $G$ is colourable, then there exists at least one inductive constraint $\constraint$. Further, $c$ satisfies $\constraint$, and $c'$ does not, because $c'$ does not satisfy any constraints at all. So $c'$ is separable from $c$. 
If $G$ is not colourable, then no constraint of $\constraints$ is inductive, and so $c'$ is not separable from $c$. 
\end{proof}

\section{Reducing length-preserving separability to SAT} \label{app:reductiontoSAT}
A explained in the main text of the paper, it suffices to construct a Boolean formula whose satisfying assignments are the words rejected by a given NFA.  Let $(Q, \Sigma, \delta, Q_{0}, F)$ be an NFA.  We introduce some propositional variables:

\medskip

\begin{itemize}
\item Variables $(\sigma, i) \in \Sigma \times \set{1, \ldots, n}$ encode the word by setting $(\sigma, i)$ to true if and only if the $i$-th letter of the word is $\sigma$. For every $i$, exactly one variable in the set $\set{(\sigma, i): \sigma \in \Sigma}$ can be made true by any satisfying assignment for the formula. It is straightforward to encode this as a formula $\psi_i$. 
\item  Variables $(q, i) \in Q \times \set{0, \ldots, n}$ encode that  state $q$ can be reached in the NFA after having read the first $i$ letters of the word.
\end{itemize}
Consider the formula
\begin{equation}
    \varphi = \bigwedge_{i=1}^n \psi_i \wedge \bigwedge \limits_{q\in Q_{0}} (q, 0) \land \bigwedge\limits_{q\in Q\setminus Q_{0}} \lnot (q, 0)
    \land \bigwedge\limits_{i = 1}^{n}\bigwedge\limits_{p \in Q} \left((p, i) \leftrightarrow \bigvee\limits_{p \in \delta(q, \sigma)} (q, i-1) \land (\sigma, i)\right).
\end{equation}
For any satisfying assignment $\chi$ of this formula, we have $\chi((q, i)) = 1$ if and only if one can reach the state $q$ in the NFA after reading a word $\sigma_{1} \;\ldots \; \sigma_{i-1}$ where $\chi((\sigma_{j}, j)) = 1$ for all $1 \leq j < i$.
To enforce that the NFA must not accept the word, one adds the clause $\bigwedge\limits_{f\in F} \lnot (f, n)$.


\end{document}